\DeclarePairedDelimiter\floor{\lfloor}{\rfloor}
\newtheorem{thm}{Theorem}
\newtheorem{lem}[thm]{Lemma}
\newtheorem{fct}[thm]{Fact}
\newtheorem{cor}[thm]{Corollary}
\newtheorem{dfn}[thm]{Definition}
\newtheorem{clm}[thm]{Claim}
\def\RR{{\mathbb R}}
\def\EE{{\mathbb E}}
\def\SS{{\mathbb S}}
\def\Pr{{\mathrm Pr}}
\DeclarePairedDelimiter\abs{\lvert}{\rvert}
\DeclarePairedDelimiter\norm{\lVert}{\rVert}
\begin{document}
\title{High-dimensional approximate $r$-nets}
\author{Georgia Avarikioti\thanks{School of Electrical and Computer Engineering, National Technical University of Athens, Athens, Greece.\if 0 E-mail:  zetavar@corelab.ntua.gr \fi} \qquad Ioannis Z.~Emiris\thanks{Department of Informatics \& Telecommunications, University of Athens, Athens, Greece. \if 0 E-mail: emiris@di.uoa.gr \fi} \qquad Loukas Kavouras\thanks{School of Electrical and Computer Engineering, National Technical University of Athens, Athens, Greece. \if 0 E-mail: lukassemfe@hotmail.com \fi } \qquad Ioannis Psarros\thanks{Department of Informatics \& Telecommunications, University of Athens, Athens, Greece. \if 0E-mail: ipsarros@di.uoa.gr \fi}}

\maketitle

\begin{abstract}
The construction of $r$-nets offers a powerful tool in computational and metric geometry.
We focus on high-dimensional spaces and
present a new randomized algorithm which efficiently computes approximate $r$-nets with respect to Euclidean distance.
For any fixed $\epsilon>0$, 
the approximation factor is $1+\epsilon$ and the complexity is polynomial in the dimension and subquadratic in the number of points.  The algorithm succeeds with high probability.  More specifically, 
the best previously known LSH-based construction of Eppstein et al.\ \cite{EHS15} is improved in terms of complexity by reducing the dependence on $\epsilon$, 
provided that $\epsilon$ is sufficiently small.
Our method does not require LSH but, instead, follows Valiant's \cite{Val15} approach in designing a sequence of reductions of our problem to other problems in different spaces, under Euclidean distance or inner product, for which $r$-nets are computed efficiently and the error can be controlled.
Our result immediately implies efficient solutions to a number of geometric problems in high dimension, such as finding the $(1+\epsilon)$-approximate $k$th nearest neighbor distance in time subquadratic in the size of the input. 

\bigskip

\noindent
Keywords: Metric geometry, High dimension, Approximation algorithms, $r$-nets, Locality-sensitive hashing
\end{abstract}

\section{Introduction}

We study $r$-nets, a powerful tool in computational and metric geometry, with several applications in approximation algorithms.
An $r$-net for a metric space $(X,\norm{\cdot}), \, |X|=n$ and for numerical parameter $r$ is a subset $R\subseteq X$ 
such that the closed $r/2$-balls centered at the points of $R$ are disjoint, and the closed $r$-balls around the same points cover all of $X$. We define approximate $r$-nets analogously. Formally,

\begin{dfn}
Given a pointset $X \subseteq \RR^d$, a distance parameter $r \in \RR$ and an approximation parameter $\epsilon >0$, a $(1+\epsilon)r$-net of $X$ is a subset $R \subseteq X$ s.t. the following properties hold:
\begin{enumerate}
\item (packing) For every $p, q \in R$, $p \neq q $, we have that $\norm{p-q}_2 \geq r$.
\item (covering) For every $p \in X$, there exists a $q \in R$ s.t. $\norm{p-q}_2 \leq (1+\epsilon)r$.
\end{enumerate}
\end{dfn}

\paragraph{Previous Work.} 
Finding $r$-nets can be addressed naively by considering all points of $X$ unmarked and, while there remains an unmarked point $p$, the algorithm adds it to $R$ and marks all other points within distance $r$ from  
$p$. The performance of this algorithm can be improved by using grids and hashing \cite{HP04}. 
However, their complexity remains too large when dealing with big data in high dimension. The naive algorithm is quadratic in $n$ and the grid approach is in $O(d^{d/2} n)$, hence it is relevant only for constant dimension $d$~\cite{HR15}. 
In \cite{HPM05}, they show that an approximate net hierarchy for an 
arbitrary finite metric can be computed in $O(2^{ddim}n \log n)$, where $ddim$ is the doubling dimension.
This is satisfactory when doubling dimension is constant, but requires a vast amount of resources when it is high.

When the dimension is high, there is need for algorithms with time complexity polynomial in $d$ and subquadratic in $n$. One approach, which computes $(1+\epsilon)r$-nets in high dimension is that of \cite{EHS15}, which uses the Locality Sensitive Hashing (LSH) method of
\cite{AI08}. The resulting time complexity is
$\tilde{O}(dn^{2-\Theta(\epsilon)})$, where $\epsilon>0$ is quite small and $\tilde{O}$ hides polylogarithmic factors. 

In general, high dimensional analogues of  classical geometric problems 
have been mainly addressed by LSH. For instance, the approximate closest pair problem can be trivially solved by performing $n$ approximate nearest neighbor (ANN) queries. For sufficiently small $\epsilon$, this costs $\tilde{O}(dn^{2-\Theta(\epsilon)})$ time, due to the complexity factor of an LSH query. Several other problems have been reduced to ANN queries \cite{GIV01}. Recently, Valiant \cite{Val12}, \cite{Val15} presented an algorithm for the approximate closest pair problem 
in time $\tilde{O}(dn^{2-\Theta(\sqrt{\epsilon})})$. This is a different approach in the sense that while LSH exploits dimension reduction through random projections, the algorithm of \cite{Val15} is inspired by high dimensional phenomena. One main step of the algorithm is that of projecting the pointset up to a higher dimension.

\paragraph{Our Contribution.} 
We present a new randomized algorithm that computes approximate $r$-nets in time subquadratic in $n$ and polynomial in the dimension, and improves upon the complexity of the best known algorithm.
Our method does not employ LSH and, with probability $1-o(1)$, it returns 
$R\subset X$, which is a $(1+\epsilon)r$-net of $X$. 

We reduce the problem of an approximate $r$-net for arbitrary vectors (points) under Euclidean distance to the same problem for vectors on the unit sphere. Then, depending on the magnitude of distance $r$, an algorithm handling ``small" distances or an algorithm handling ``large" distances is called. 
These algorithms reduce the Euclidean problem of $r$-nets on unit vectors to finding an $r$-net for unit vectors under inner product (Section~\ref{SGeneral}). This step requires that the multiplicative $1+\epsilon$ approximation of the distance corresponds to an additive $c\epsilon$ approximation of the inner product, for suitable constant $c>0$.

Next, we convert the vectors having unit norm into vectors with entries $\{-1, +1\}$ (Section \ref{SInner}).
This transformation is necessary in order to apply the Chebyshev embedding of \cite{Val15}, an embedding that damps the magnitude of the inner product of ``far" vectors, while preserving the magnitude of the inner product of ``close"
vectors. 
For the final step of the algorithm, we first 
apply a procedure that allows us to efficiently 
compute $(1+\epsilon)$-nets in the case where the number of ``small" distances is large. Then, we apply a modified version of the {\tt Vector Aggregation} algorithm of \cite{Val15}, that exploits fast 
matrix multiplication, so as to achieve the desired running time.

In short, we extend Valiant's framework \cite{Val15} and we compute $r$-nets in time $\tilde{O}(dn^{2-\Theta(\sqrt{\epsilon})})$, thus improving on the exponent of the LSH-based construction \cite{EHS15}, when $\epsilon$ is small enough. This improvement by $\sqrt{\epsilon}$ in the exponent is the same as the complexity improvement obtained in \cite{Val15} over the LSH-based algorithm for the approximate closest pair problem. 

Our study is motivated by the observation that computing efficiently an $r$-net leads to efficient solutions for several geometric problems, specifically in approximation algorithms. In particular, our extension of $r$-nets in high dimensional Euclidean space can be plugged in the framework of~\cite{HR15}. The new framework has many applications, notably the $k$th nearest neighbor distance problem, which we solve in $\tilde{O}(dn^{2-\Theta(\sqrt{\epsilon})})$.


\paragraph{Paper Organization.}
Section \ref{SInner} presents an algorithm for 
computing an approximate net with respect to the inner product for a set of unit vectors.
Section~\ref{SGeneral} translates the problem of finding $r$-nets under Euclidean distance to the same problem under inner product. In Section \ref{Sapps}, we discuss applications of our construction and possible future work. Omitted proofs are included in the Appendices.

We use $\norm{\cdot}$ to denote the Euclidean norm $\norm{\cdot}_2$ throughout the paper. 

\section{Points on a sphere under inner product}\label{SInner}
In this section, we design an algorithm for constructing an approximate $\rho$-net of vectors on the sphere under inner product.
To that end, we reduce the problem to constructing an approximate net under absolute inner product for vectors that lie on the vertices of a unit hypercube.

Since our ultimate goal is a solution to computing $r$-nets with respect to Euclidean distance, we allow additive error in the approximation, which under certain assumptions, translates to multiplicative error in Euclidean distance. 
In the following, we define rigorously the notion of approximate $\rho$-nets under inner product.

\begin{dfn}
\label{DfnNetInn}
For any $X\subset \SS^{d-1}$, an approximate $\rho$-net for $(X,\langle \cdot,\cdot\rangle)$ , 
 with additive approximation parameter $\epsilon>0$, is a subset $C\subseteq X$ which satisfies the following properties:
 \begin{itemize}
 \item for any two $p \neq q \in C$, $\langle p, q \rangle < \rho$, and
  \item for any $x\in X$, there exists $p \in C$ s.t. $\langle x, p \rangle \geq \rho-\epsilon$.
 \end{itemize}

\end{dfn}
One relevant notion is that of $\epsilon$-kernels \cite{AHV05}. In $\epsilon$-kernels, one is interested in finding a subset of the input pointset, which approximates its directional width. Such constructions have been extensively studied when the dimension is low, due to their relatively small size.  

\subsection{Crude approximate nets}

In this subsection we develop our basic tool, which is based on the Vector Aggregation Algorithm by \cite{Val15}. This tool aims to compute  approximate $\rho$-nets with multiplicative error, as opposed to what we have set as our final goal for this section, namely to bound additive error. 
Moreover, in the context of this subsection, two vectors are close to each other when the magnitude 
of their inner product is large, and two vectors are 
far from each other when the magnitude of their inner product is small.
Let $|\langle \cdot,\cdot \rangle|$ denote the magnitude of the inner product of two vectors.

\begin{dfn} \label{DfnMagnInnNet}
 For any $X=[x_1,\ldots, x_n], X' =[x_1',\ldots,x_n'] \subset \RR^{d \times n}$, a crude approximate $\rho$-net for $(X,X',|\langle \cdot,\cdot \rangle|)$, 
 with multiplicative approximation factor $c>1$, is a subset $C\subseteq [n]$ which satisfies the following properties:
 \begin{itemize}
    \item for any two $i \neq j \in C$, $|\langle x_i, x_j' \rangle| < c \rho$, and
\item for any $i\in [n]$, there exists $j \in C$ s.t.\ $|\langle x_i, x_j' \rangle| \geq \rho$.
 \end{itemize}

\end{dfn}

{\tt Vector Aggregation} follows the 
exposition of \cite{Val15}. 
The main difference is that, instead of the ``compressed'' matrix $Z^T Z$, we use the form $X^T Z$, where $Z$ derives from vector aggregation.
Both forms encode the information in the Gram matrix $X^T X$.
The matrix $X^TZ$ is better suited for our purposes, since each row corresponds to an input vector instead of an aggregated subset; this extra information may be useful in further problems.

\begin{framed}

{\tt Vector Aggregation}\\

Input: $X =[x_1,\ldots,x_n] \in \RR^{d \times n}$, 
$X' =[x_1',\ldots,x_n'] \in \RR^{d \times n}$, 
$\alpha \in (0,1)$, $\tau>0$. 

Output: $n\times n^{1-\alpha}$ matrix $W$ and random partition $S_1 , \ldots , S_{n^{1-\alpha}}$ of $\{x_1,\ldots,x_n\}$.

  \begin{itemize}
   \item Randomly partition $[n]$ into $n^{1-\alpha}$ disjoint subsets, each of size $n^{\alpha}$ , denoting the
sets $S_1 , \ldots , S_{n^{1-\alpha}}$.
   \item For each $i = 1, 2, \ldots , 78 \log n$:
   \begin{itemize}
    \item Select $n$ coefficients $q_1 ,\ldots , q_n \in \{-1, +1\}$ at random.
    \item Form the $d\times n^{1-\alpha}$ matrix $Z^i$ with entries $z_{j,k}^i=\sum_{l\in S_k} q_l \cdot x_{j,l}'$ 
    \item $W^i=X^T Z^i$
   \end{itemize}

   \item Define the $n \times n^{1-\alpha}$ matrix $W$ with $w_{i,j} =quartile(|w_{i,j}^1|,\ldots|w_{i,j}^{78 \log n}|)$. 
   \item Output $W$ and $S_1 , \ldots , S_{n^{1-\alpha}}$.
  \end{itemize}
 
\end{framed}

\begin{thm}\label{ThmVecAgg}
Let $X \in \RR^{d \times n}$, 
$X'  \in \RR^{d \times n}$, 
$\alpha \in (0,1)$, $\tau>0$ the input of {\tt Vector Aggregation}. Then, the algorithm returns a matrix $W$ of size $n\times n^{1-\alpha}$ and a random partition $S_1 , \ldots , S_{n^{1-\alpha}}$, which with probability $1-O(1/n^3)$ satisfies the following:
\begin{itemize}
\item For all $j\in [n]$ and $k\in [n^{1-\alpha}]$, if $\forall u \in S_k$, $ |\langle x_j, u \rangle|\leq \tau$ then $|w_{j,k}| < 3 \cdot n^{\alpha} \tau$.
\item For all $j\in [n]$ and $k\in [n^{1-\alpha}]$ 
if $\exists u\in S_k$, $|\langle x_j, u \rangle|\geq 3n^{\alpha}\tau$ then $|w_{j,k}| \geq 3 \cdot n^{\alpha} \tau$.
\end{itemize}
Moreover, the algorithm runs in time 
$\tilde{O}(dn+n^{2-\alpha}+MatrixMul( n \times d,d \times n^{1-\alpha}))$.
\end{thm}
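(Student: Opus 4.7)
The plan is to treat each entry $w^i_{j,k} = \sum_{l \in S_k} q_l \langle x_j, x'_l \rangle$ as a Rademacher sum of length $n^\alpha$ in the coefficients $a_l := \langle x_j, x'_l \rangle$, establish per-trial bounds corresponding to each of the two cases, amplify via the $78 \log n$ independent repetitions by taking the (upper) quartile, and finally union-bound over the $n \cdot n^{1-\alpha}$ entries $(j,k)$.

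For Case~1 (all $|a_l| \leq \tau$ for $l \in S_k$), I would use the second-moment bound $\mathbb{E}[(w^i_{j,k})^2] = \sum_l a_l^2 \leq n^\alpha \tau^2$ together with Chebyshev's inequality to obtain $\Pr[|w^i_{j,k}| \geq 3 n^\alpha \tau] \leq 1/(9 n^\alpha)$; a Chernoff bound then guarantees that fewer than a quarter of the $78 \log n$ trials exceed the threshold (hence the third quartile is below it) with probability $1 - n^{-\Omega(1)}$. For Case~2 (some $|a_{l^*}| \geq 3 n^\alpha \tau$), I would use a symmetric pairing argument: conditioning on $\{q_l : l \neq l^*\}$ and letting $T = \sum_{l \neq l^*} q_l a_l$, the identity $|T + a_{l^*}|^2 + |T - a_{l^*}|^2 = 2 T^2 + 2 a_{l^*}^2$ forces $\max(|T+a_{l^*}|,|T-a_{l^*}|) \geq |a_{l^*}|$, so $\Pr[|w^i_{j,k}| \geq 3 n^\alpha \tau] \geq 1/2$ per trial; a second Chernoff bound then yields that at least a quarter of the $78 \log n$ trials reach the threshold (so the quartile is at least $3 n^\alpha \tau$) with probability $1 - n^{-\Omega(1)}$. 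The constant $78$ is calibrated so that both per-entry failure probabilities are much smaller than $1/n^{5-\alpha}$, and the union bound over the $n^{2-\alpha}$ entries produces the claimed $1 - O(1/n^3)$ overall success probability.

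The running-time analysis is then routine: assembling each $Z^i$ costs $O(dn)$ (each column $x'_l$ is written once into column $S^{-1}(l)$ with sign $q_l$), the product $W^i = X^T Z^i$ costs $\mathrm{MatrixMul}(n \times d, d \times n^{1-\alpha})$, and extracting the quartile from $78 \log n$ samples at each of the $n^{2-\alpha}$ entries is $\tilde{O}(n^{2-\alpha})$ in total; the $78 \log n$ repetitions are absorbed into $\tilde{O}(\cdot)$. The subtle ingredient is the anticoncentration step in Case~2: generic sub-Gaussian tail inequalities only control an upper tail, and it is the pairing identity above that both produces the required lower tail from a single large coefficient without any assumption on the remaining $a_l$ and explains the choice of quartile as the right summary statistic — the mean could cancel, while the median would demand a strictly larger per-trial probability than the $1/2$ given by the pairing bound.
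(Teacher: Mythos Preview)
Your proposal is correct and follows essentially the same route as the paper: Chebyshev on the Rademacher sum for Case~1, an anti-concentration lemma for Case~2 (the paper conditions on the remaining signs and uses a sign-matching argument rather than your parallelogram identity, but the two are equivalent and yield the same $\geq 1/2$ per-trial bound), then Chernoff over the $78\log n$ repetitions and a union bound over the $n^{2-\alpha}$ entries. Your variance bound $\sum_l a_l^2 \le n^{\alpha}\tau^2$ is in fact tighter than the $n^{2\alpha}\tau^2$ written in the paper, but this only helps; otherwise the arguments coincide.
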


For the case of pointsets with many ``small" distances, 
we rely crucially on the fact that the expected 
number of near neighbors for a randomly chosen point 
is large. 
So, if we iteratively choose random points and 
 delete these and their neighbors, we will end up with 
 a pointset which satisfies the property of having sufficiently few ``small" distances. Then, we apply {\tt Vector Aggregation}.

\begin{framed}
{\tt Crude ApprxNet}\\

Input: $X =[x_1,\ldots,x_n] \in \RR^{d \times n}$, 
$X' =[x_1',\ldots,x_n'] \in \RR^{d \times n}$, 
$\alpha \in (0,1)$, $\tau>0$. 

Output: $C'\subseteq [n]$, {$F' \subseteq [n]$}.

\begin{itemize}
\item $C\gets \emptyset$, $F_1 \gets \emptyset, F_2 \gets \{x_1,\ldots,x_n\}$
\item Repeat $n^{0.5}$ times:
 \begin{itemize}
 \item Choose a column $x_i$ uniformly at random.
 \item $C \gets C \cup \{x_i\}$.
  \item Delete column $i$ from matrix $X$ and column $i$ from matrix $X'$.
 \item Delete each column $k$ from matrix $X$, $X'$ s.t. $|\langle x_i, x_k'
 \rangle| \geq \tau$.
  \item If there is no column $k$ from matrix $X$ s.t. $|\langle x_i, x_k'\rangle| \geq \tau$, then $F_1 \gets F_1 \cup \{x_i\}$
  \end{itemize}

\item Run {\tt Vector Aggregation} 
with input $X$, $X'$, $\alpha$, $\tau$ and output 
$W$, $S_1,\ldots,S_{n^{1-\alpha}}$.
\item For each of the remaining columns $i=1,\ldots$:
   \begin{itemize}
    \item For any $|w_{i,j}|\geq 3 n^{\alpha} \tau$:
     \begin{itemize}
     \item If more than $n^{1.7}$ times in here, output "ERROR".
      \item Compute inner products between $x_i$ and vectors in $S_j$. 
      For each vector $x_k' \in S_j$ s.t. $x_k' \neq x_i$ and $|\langle x_i,x_k'\rangle|\geq \tau$, 
      delete row $k$ {and $F_2 \gets F_2 \backslash \{ x_i\}$.}
     \end{itemize}
     \item $C \gets C \cup \{x_i\} $ 
   \end{itemize}
\item Output indices of $C$ {and $F \gets \{F_1 \cup F_2 \}$}.
\end{itemize}

\end{framed}

\begin{thm}\label{ThmCrudeNet}
  On input $X =[x_1,\ldots,x_n] \in \RR^{d \times n}$, 
$X' =[x_1',\ldots,x_n'] \in \RR^{d \times n}$, 
$\alpha \in (0,1)$, $\tau>0$, {\tt Crude ApprxNet}, computes a crude  $3n^{\alpha}$-approximate $\tau$-net for $X$, $X'$, following the notation of Definition \ref{DfnMagnInnNet}. 
The algorithm costs time:
$$
\tilde{O}(n^{2-\alpha}+ d \cdot n^{1.7+\alpha}+MatrixMul( n \times d,d \times n^{1-\alpha})),
$$
and succeeds with probability $1-O(1/n^{0.2})$. 
Additionally, it outputs a set $F\subseteq R$ with the following property:
$\{x_i \mid \forall x_j \neq x_i~ |\langle x_j,x_i \rangle | <  \tau \}\subseteq F \subseteq \{x_i \mid \forall x_j \neq x_i~ |\langle x_j,x_i \rangle | < n^a\tau \}$.
\end{thm}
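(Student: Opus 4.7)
My plan is to decompose the proof into three essentially independent parts: (i) correctness of the packing/covering guarantees for $C$, (ii) a high-probability bound on the close-pairs structure surviving the first loop, which drives both the complexity and the failure probability, and (iii) the structural property of $F$.

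For correctness I would do a case analysis on the phase in which each vertex of $C$ entered. For packing: if $x_i\in C$ was selected in the first loop, then any later $x_j\in C$ must have survived the explicit Phase~1 filter during $x_i$'s iteration, so $|\langle x_i,x_j'\rangle|<\tau$. If instead both $x_i,x_j\in C$ were added in Phase~2, then any value $|\langle x_i,x_j'\rangle|\ge 3n^\alpha\tau$ would, by Theorem~\ref{ThmVecAgg}, force $|w_{\cdot,k}|\ge 3n^\alpha\tau$ in the bucket $S_k$ containing the later vertex, and the brute-force inner-product pass inside Phase~2 would delete it. Covering follows immediately, since every $x_i\notin C$ was removed by some witness $x_l\in C$ with $|\langle x_l,x_i'\rangle|\ge\tau$ either in Phase~1 or during the Phase~2 brute-force pass.

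The probabilistic core I have in mind is the claim that with probability $1-o(1)$ every vertex surviving Phase~1 has at most $n^{0.7}$ remaining $\tau$-neighbours. My plan is to fix such a surviving vertex $u$ with final degree $d$; since deletions can only decrease degrees, the live degree of $u$ throughout Phase~1 is at least $d$, so at every iteration the probability that $u$ is ``touched'' (sampled itself, or a live neighbour sampled) is at least $d/n$. For $d>n^{0.7}$ this makes the probability of $u$ being untouched across all $n^{0.5}$ iterations at most $(1-n^{-0.3})^{n^{0.5}}\le e^{-n^{0.2}}$, and a union bound over the $n$ vertices lies comfortably inside the stated $O(1/n^{0.2})$ failure budget. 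Given this degree bound, the first bullet of Theorem~\ref{ThmVecAgg} implies that any triggered cell $|w_{i,j}|\ge 3n^\alpha\tau$ witnesses at least one $\tau$-close neighbour of $x_i$ in $S_j$, so the number of triggered cells is at most $\sum_i\deg(x_i)\le n\cdot n^{0.7}=n^{1.7}$, the ``ERROR'' branch is never reached, and Phase~2 performs at most $n^{1.7}$ inner products of cost $dn^\alpha$ each. Adding the $\tilde O(dn^{1.5})$ cost of the Phase~1 filtering and the cost of {\tt Vector Aggregation} from Theorem~\ref{ThmVecAgg} then yields the claimed running time.

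Finally, for $F=F_1\cup F_2$, the lower inclusion is immediate: a vertex with no $\tau$-close neighbour is never deleted, so either it enters $F_1$ in Phase~1 (its live neighbourhood being empty) or it reaches Phase~2 and is never removed from $F_2$ since no bucket ever triggers. For the upper inclusion, the case $x_i\in F_2$ is direct from the contrapositive of Theorem~\ref{ThmVecAgg}: the absence of any triggered cell at processing time rules out a live $u$ with $|\langle x_i,u\rangle|\ge 3n^\alpha\tau$. The hard case I anticipate is $x_i\in F_1$, because the Phase~1 filter only certifies the absence of $\tau$-close neighbours \emph{among columns still alive} at $x_i$'s iteration, whereas the claim asks for a bound against every original $x_j$; my plan is to trace any previously deleted candidate $x_j$ back to the witness $x_l\in C$ that pruned it and chain the constraints $|\langle x_l,x_i'\rangle|<\tau$ (inherited from $x_i$ surviving $x_l$'s own filter) with $|\langle x_l,x_j'\rangle|\ge\tau$ to recover the needed $n^\alpha\tau$ slack on $|\langle x_j,x_i\rangle|$.
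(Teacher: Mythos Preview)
Your packing/covering argument for $C$ and your Phase~2 accounting match the paper's reasoning. Your probabilistic core, however, is genuinely different from the paper's. The paper bounds the \emph{expected} number of surviving $\tau$-close pairs by $O(n^{1.5})$ via an average-degree argument (if the average out-degree ever drops below $2n^{0.5}$ the total is at most $n^{1.5}$; otherwise one expects to delete everything) and then applies Markov's inequality to get the $n^{1.7}$ threshold with failure probability $n^{-0.2}$ --- this Markov step is precisely the source of the stated $1-O(1/n^{0.2})$. Your per-vertex survival argument is more work but is correct (the monotonicity of the live in-neighbourhood lets the conditioning telescope cleanly) and yields a far smaller failure probability of order $n\cdot e^{-n^{0.2}}$, well within budget. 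Both routes feed into the same Phase~2 cost analysis.

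Your handling of $F$ has a genuine gap. The ``chaining'' you sketch for $x_i\in F_1$ cannot work: from $|\langle x_l, x_i'\rangle|<\tau$ and $|\langle x_l, x_j'\rangle|\ge\tau$ one can infer nothing whatsoever about $|\langle x_j, x_i\rangle|$, because there is no triangle-type inequality for inner products. Concretely, take $x_l$ nearly orthogonal to $x_i$, $x_l$ mildly correlated with $x_j$, and $x_j$ extremely correlated with $x_i$: your two hypotheses hold while the desired conclusion fails by an arbitrary margin. In other words, as the algorithm is written, a Phase~1 center $x_i$ can enter $F_1$ (no \emph{currently live} $\tau$-neighbour) while some already-deleted $x_j$ satisfies $|\langle x_j, x_i\rangle|\gg n^{\alpha}\tau$, so the claimed upper inclusion for $F$ is not recoverable along the route you propose. (For what it is worth, the paper's own proof does not address the $F$ property at all.)
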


\begin{proof}

We perform $n^{0.5}$ iterations and for each, we compare the inner products between the randomly chosen vector and all other vectors. Hence, the time needed is $O(dn^{1.5})$.

In the following, we denote by 
$X_i$ the number of vectors which have ``large" magnitude of the inner product with the randomly chosen point in the $i$th iteration. 
Towards proving correctness, 
suppose first that $\EE[X_i]>2n^{0.5}$ for all $i=1, \ldots n^{0.5}$. The expected number of vectors we delete in each 
iteration of the algorithm is more than $2n^{0.5}+1$. So, after $n^{0.5}$ iterations,
the expected total number of deleted vectors will be greater than $n$. This means
that if the hypothesis holds for all iterations we 
will end up with a proper net. 

Now suppose that there is an iteration $j$ where $\EE[X_j] \leq 2n^{0.5}$. After all iterations, the 
number of ``small" distances are at most $n^{1.5}$ on expectation. By Markov's inequality, 
when the 
{\tt Vector Aggregation} algorithm is called, the following is satisfied
with probability 
$1-n^{-0.2}$ : 
$$
|\{(i,k) \mid |\langle x_i, x_k'\rangle| \geq\tau, i\neq k\}| \leq n^{1.7} .
$$
By Theorem \ref{ThmVecAgg}
and the above discussion, the number of entries in the matrix $W$ that we need to visit is at most $n^{1.7}$. For each entry, we perform a brute force 
which costs $d n^\alpha$. 

Now notice that the first iteration stores centers $c$ and deletes all points $p$ for which $|\langle c,p\rangle| \geq \tau$. Hence,  any two centers $c,c'$ satisfy $ |\langle c,p\rangle| < \tau$. In the second iteration, over the columns of $W$, notice that by Theorem \ref{ThmVecAgg}, for any two centers $c,c'$ we have $|\langle c,c'\rangle| <3 n^{\alpha}\tau.$
\end{proof}


\subsection{Approximate inner product nets}

In this subsection, we show that the problem of computing $\rho$-nets for the inner product of unit 
vectors reduces to the less natural problem of Definition \ref{DfnMagnInnNet}, which refers to the magnitude of the inner product.

The first step consists of mapping the unit vectors to vectors in $\{-1,1\}^{d'}$. The mapping is essentially Charikar's LSH scheme \cite{Cha02}. 
Then, we apply the Chebyshev embedding of~\cite{Val15} in order to achieve gap amplification, and finally 
we call algorithm {\tt Crude ApprxNet}, which will now return 
a proper $\rho$-net with additive error.  
\begin{thm}[\cite{Val15}] \label{ThmUnif}
There exists an algorithm with the following properties. Let $d'=O(\frac{\log n}{\delta^2})$
and $Y \in \RR^{d'\times n}$ denote its output 
on input $X$, $\delta$, where $X$ is a
matrix whose columns have unit norm, with probability $1 - o(1/n^2 )$, for all pairs $i, j \in [n]$,
$
\Big|{\langle Y_i , Y_j \rangle}/{d'}-\Big(1-2 \cdot {\mathrm{cos}^{-1}(\langle X_i,X_j \rangle)}/{ \pi}\Big)\Big| \leq \delta
,$
where $X_i$, $Y_i$ denote the $i$th column of $X$ and $Y$ respectively. Additionally, the runtime of 
the algorithm is $O( \frac{d n \log n}{\delta^2})$.
\end{thm}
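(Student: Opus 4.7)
The plan is to recognize this as a standard guarantee for Charikar's sign-of-random-projection (SimHash) sketch, and to prove it by combining the classical ``random hyperplane'' identity with a Chernoff/Hoeffding concentration bound and a union bound over pairs.

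First I would describe the algorithm: draw $d'$ i.i.d.\ Gaussian vectors $h_1,\ldots,h_{d'}\in\RR^d$ with $h_k\sim \mathcal{N}(0,I_d)$, and define $Y_{k,i}=\operatorname{sign}(\langle h_k, X_i\rangle)\in\{-1,+1\}$ (ties broken arbitrarily; they have measure zero). Since each $Y_i\in\{-1,+1\}^{d'}$, we can write $\langle Y_i,Y_j\rangle = \sum_{k=1}^{d'} Y_{k,i}Y_{k,j}$, where each summand is $+1$ when the two signs agree and $-1$ when they disagree. Charikar's identity (which follows from the rotational invariance of the Gaussian and a 2D geometric argument) gives
\[
\Pr\!\left[\operatorname{sign}(\langle h_k,X_i\rangle)=\operatorname{sign}(\langle h_k,X_j\rangle)\right] = 1 - \frac{\cos^{-1}(\langle X_i,X_j\rangle)}{\pi},
\]
using that $\|X_i\|=\|X_j\|=1$. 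Hence $\EE[Y_{k,i}Y_{k,j}] = 1 - \tfrac{2\cos^{-1}(\langle X_i,X_j\rangle)}{\pi}$, so $\EE[\langle Y_i,Y_j\rangle/d']$ is exactly the quantity to be approximated.

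Next I would apply concentration. For fixed $i,j$ the random variables $Y_{k,i}Y_{k,j}$ for $k=1,\ldots,d'$ are independent and bounded in $[-1,1]$, so Hoeffding's inequality yields
\[
\Pr\!\left[\,\left|\frac{\langle Y_i,Y_j\rangle}{d'}-\EE\!\left[\frac{\langle Y_i,Y_j\rangle}{d'}\right]\right|>\delta\,\right]\le 2\exp(-\Omega(\delta^2 d')).
\]
Choosing $d' = C\log n/\delta^2$ with $C$ a sufficiently large constant makes this probability $o(1/n^4)$. A union bound over the $\binom{n}{2}=O(n^2)$ pairs $(i,j)$ then gives the claimed $1-o(1/n^2)$ uniform guarantee.

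Finally, the runtime is immediate: computing $Y_{k,i}=\operatorname{sign}(\langle h_k,X_i\rangle)$ for every $k\in[d']$ and every $i\in[n]$ takes $O(d)$ per evaluation and there are $nd'$ evaluations, for a total of $O(dnd')=O(dn\log n/\delta^2)$, matching the stated bound. The only genuinely nontrivial ingredient is Charikar's inner-product identity for sign projections, which is a standard fact; everything else is a routine Hoeffding-plus-union-bound argument, so there is no real obstacle beyond verifying the constants when choosing $d'$.
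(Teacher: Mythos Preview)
Your proposal is correct. The paper does not actually prove this theorem; it is quoted from \cite{Val15} and only the statement is given, with the remark that ``the mapping is essentially Charikar's LSH scheme \cite{Cha02}.'' Your argument---Charikar's random-hyperplane identity for the expectation, Hoeffding over the $d'$ independent $\pm1$ coordinates, and a union bound over $O(n^2)$ pairs---is exactly the standard derivation behind that citation, so there is nothing to compare against and no gap to flag.
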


The following theorem provides a randomized embedding that damps the magnitude of the inner product of ``far" vectors, while preserving the magnitude of the inner product of ``close"
vectors. The statement is almost verbatim that of~\cite[Prop.6]{Val15} except that we additionally establish an asymptotically better probability of success. The proof is the same, but since we claim stronger guarantees on success probability, we include the complete proof in Appendix~\ref{AppThmCheb}.

\begin{thm}\label{ThmCheb}
Let $Y$, $Y'$ be the matrices output by algorithm ``Chebyshev Embedding" on
input $X, X' \in \{-1,1\}^{d\times n} , \tau^{+}\in [-1,1] , \tau^{-} \in [-1,1]$ with $\tau^{-}<\tau^{+}$ , integers $q, d'$. 
With probability
{$ 1 - o(1/n)$} over the randomness in the construction of
$Y, Y'$, for all $i, j \in [n]$, 
$\langle Y_i , Y_j' \rangle$ is within $\sqrt{d'} \log n$ from the value 
$
T_q \Big(\frac{\langle X_i, X_j'\rangle/d' - \tau^{-}}{\tau^{+}-\tau^{-}}2 -1  \Big) \cdot d' \cdot (\tau^{+}-\tau^{-})^q  /{2^{3q-1}},
$
where $T_q$ is the degree-$q$ Chebyshev polynomial of the first kind. The algorithm runs in
time $O(d' \cdot n\cdot q)$.
\end{thm}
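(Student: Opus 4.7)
The plan is to follow Valiant's original argument in two stages: first, compute the expectation of $\langle Y_i, Y_j'\rangle$ and check that it matches the target Chebyshev quantity; second, apply a tight concentration inequality and union bound over the $n^2$ pairs $(i,j)$ to upgrade Valiant's constant success probability to $1-o(1/n)$.

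For the expectation step, I would inspect the ``Chebyshev Embedding'' algorithm and verify that it produces $d'$ mutually independent coordinates, where each coordinate is generated by first sampling a degree $\ell \in \{0,1,\ldots,q\}$ from a distribution weighted by the absolute values of the coefficients of the suitably rescaled and shifted polynomial $T_q\bigl((2x-\tau^--\tau^+)/(\tau^+-\tau^-)\bigr)$, then sampling $\ell$ random coordinates $k_1,\ldots,k_\ell \in [d]$, and finally setting $Y_{k,i}$ to a signed product $\pm\prod_t X_{k_t,i}$ whose sign matches the corresponding Chebyshev coefficient. Since $\EE[\prod_t X_{k_t,i}\,\prod_t X_{k_t,j}'] = (\langle X_i,X_j'\rangle/d)^\ell$ when the $k_t$ are drawn uniformly with replacement, $\EE\bigl[Y_{k,i}\,Y_{k,j}'\bigr]$ is, up to the normalization $(\tau^+-\tau^-)^q/2^{3q-1}$, precisely the value $T_q\bigl((2\langle X_i, X_j'\rangle/d' - \tau^- - \tau^+)/(\tau^+-\tau^-)\bigr)$. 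Summing over the $d'$ independent coordinates recovers exactly the target expectation $\EE[\langle Y_i, Y_j'\rangle]$ appearing in the theorem statement.

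For the concentration step, each summand $Y_{k,i}\,Y_{k,j}'$ is bounded in magnitude by a constant $M = M(q,\tau^+,\tau^-)$ independent of $n$ and $d'$, and the $d'$ summands are mutually independent across $k$. Hoeffding's inequality then yields
\[
\Pr\Bigl[\,\bigl|\langle Y_i, Y_j'\rangle - \EE[\langle Y_i, Y_j'\rangle]\bigr| \geq \sqrt{d'}\log n\,\Bigr] \leq 2\exp\bigl(-\Omega(\log^2 n)\bigr) = n^{-\omega(1)},
\]
so a union bound over the $n^2$ pairs $(i,j)$ still leaves a total failure probability of $n^{-\omega(1)} = o(1/n)$, which is the stated guarantee. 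The running time bound $O(d'\cdot n\cdot q)$ then follows directly from the description of the embedding, since each of the $n$ vectors produces $d'$ coordinates and each coordinate involves an $\ell$-fold product with $\ell \le q$.

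The main technical obstacle will be verifying the expectation identity cleanly: one has to check that the weighted random monomial sampling produces exactly the rescaled Chebyshev polynomial in expectation, including the $1/2^{3q-1}$ normalization and the affine change of variables $x \mapsto (2x-\tau^--\tau^+)/(\tau^+-\tau^-)$ coming from the fact that $T_q$ is a Chebyshev polynomial of the first kind on $[-1,1]$ while the relevant inner products lie in $[\tau^-,\tau^+]$. Once this identity is in hand, the concentration and union bound steps are routine, and the strengthening from Valiant's original $1-o(1)$ probability to $1-o(1/n)$ is essentially free --- the $\log n$ factor built into the deviation bound $\sqrt{d'}\log n$ already yields quasi-polynomial tail decay, which comfortably absorbs a union bound over any polynomial number of events.
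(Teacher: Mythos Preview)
Your concentration-plus-union-bound step is exactly what the paper does: the entries of $Y,Y'$ are $\pm 1$, so each inner product is a sum of $d'$ i.i.d.\ bounded terms, Hoeffding/Chernoff gives tail $2\exp(-\Theta(\log^2 n))=o(1/n^3)$, and a union bound over $O(n^2)$ pairs yields the claimed $1-o(1/n)$. That part is fine and is indeed the only place where the paper's proof differs from Valiant's (the deviation $\sqrt{d'}\log n$ already gives super-polynomial decay, so the improvement to $1-o(1/n)$ is free).

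The gap is in your expectation step: your description of the Chebyshev Embedding algorithm is not Valiant's algorithm. You hypothesize a coefficient-based scheme (sample a degree $\ell$ with probability proportional to $|a_\ell|$, then a random $\ell$-fold monomial), but the actual construction is \emph{root-based}: every coordinate of $Y_i$ is a product of exactly $q$ independent $\pm 1$ factors $s_{v}(1),\ldots,s_{v}(q)$, one per root $c_1,\ldots,c_q$ of $T_q$ rescaled to $[\tau^-,\tau^+]$, and the inner loop is designed so that $\EE[s_v(l)\,t_{v'}(l)]=(x-c_l)/2$ with $x=\langle X_i,X_j'\rangle/d$. Independence across $l$ then gives $\EE[\langle Y_i,Y_j'\rangle]=d'\prod_{l=1}^q (x-c_l)/2$, and matching leading coefficients ($T_q$ has leading coefficient $2^{q-1}$, the product has $1/2^q$, and the roots were stretched by $(\tau^+-\tau^-)/2$) produces the factor $(\tau^+-\tau^-)^q/2^{3q-1}$. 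Since the theorem is about this specific algorithm, your coefficient-sampling analysis does not apply as written; once you inspect the actual construction the root-product calculation replaces your monomial-sampling calculation, and the rest of your outline goes through unchanged.
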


\begin{framed}
{\tt Inner product ApprxNet}\\

Input: $X =[x_1,\ldots,x_n] $ with each $x_i \in \SS^{d-1}$, $\rho \in [-1,1]$, $\epsilon \in (0,1/2]$.

Output: Sets $C, F \subseteq [n]$.

\begin{itemize}
\item If $\rho\leq \epsilon$, then:
\begin{itemize}
\item $C \gets \emptyset$, $F\gets \emptyset$, $W \gets \{x_1,\ldots,x_n\}$
\item While $W\neq \emptyset$:
\begin{itemize}
\item Choose arbitrary vector $x\in W$.
\item $W \gets W \setminus \{y \in W \mid \langle x,y\rangle \geq \rho-\epsilon \}$
\item $C \gets C \cup \{x\}$
\item If $\forall y \in W$, $\langle x,y \rangle<\rho-\epsilon $ then $F\gets F. \cup \{x\}$
\end{itemize}
\item Return indices of $C$, $F$.
\end{itemize}
\item Apply Theorem \ref{ThmUnif} for input $X$, $\delta=\epsilon/2 \pi$  
and output $Y\in \{-1,1\}^{d' \times n}$ for $d'=O(\log n/\delta^2)$.
\item Apply Theorem \ref{ThmCheb} 
for input $Y$, $d''= n^{0.2}$, $q=50^{-1} \log n$, 
$\tau^-=-1$, $\tau^{+}=1-\frac{2 \cos^{-1}(\rho-\epsilon)}{\pi} +\delta$
and output $Z, Z'$. 
\item Run algorithm {\tt Crude ApprxNet} 
with input $\tau=3n^{0.16}$, $\alpha=\sqrt{\epsilon}/500$, $Z,Z'$ and output $C$, $F$.
\item Return $C$, $F$.
\end{itemize}
\end{framed}

\begin{thm}\label{AppIPnet}
The algorithm {\tt Inner product ApprxNet}, on input  $X =[x_1,\ldots,x_n] $ with each $x_i \in \SS^{d-1}$, $\rho \in [-1,1]$ and $\epsilon \in (0,1/2]$,
computes an approximate $\rho$-net with additive 
error 
$\epsilon$, using the notation of Definition \ref{DfnNetInn}. The algorithm runs in time $\tilde{O}(dn+n^{2-\sqrt{\epsilon}/600})$ and 
succeeds with probability $1-O(1/n^{0.2})$. Additionally, it computes a set $F$ with the following property:
$\{x_i \mid \forall x_j \neq x_i~ \langle x_j,x_i \rangle  < \rho -\epsilon \}\subseteq F \subseteq \{x_i \mid \forall x_j \neq x_i~ \langle x_j,x_i \rangle < \rho  \}$.
\end{thm}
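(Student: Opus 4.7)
I would split the proof into the two regimes the algorithm itself handles. For $\rho>\epsilon$, the three-stage pipeline first converts the unit vectors $X$ into $\{-1,1\}$-valued vectors $Y$ of dimension $d'=O(\log n/\epsilon^2)$ via Theorem~\ref{ThmUnif}, so that $\langle Y_i,Y_j\rangle/d'$ approximates $\gamma(\langle x_i,x_j\rangle):=1-2\cos^{-1}(\langle x_i,x_j\rangle)/\pi$ to additive accuracy $\delta=\epsilon/(2\pi)$; then amplifies gaps using the Chebyshev embedding of Theorem~\ref{ThmCheb} with $\tau^-=-1$, $\tau^+=\gamma(\rho-\epsilon)+\delta$, $q=(\log n)/50$, and output dimension $d''=n^{0.2}$, producing $Z,Z'$; and finally invokes {\tt Crude ApprxNet} on $Z,Z'$ with $\tau=3n^{0.16}$ and $\alpha=\sqrt{\epsilon}/500$. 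The parameters are chosen so that the Chebyshev ``wall'' at argument $u=1$ lands strictly between the two inner-product thresholds of interest, namely $\rho-\epsilon$ (covering) and $\rho$ (packing).

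The heart of the argument is to prove, up to the additive noise $\sqrt{d''}\log n = O(n^{0.1}\log n)$ guaranteed by Theorem~\ref{ThmCheb}, the implications $\langle x_i,x_j\rangle<\rho-\epsilon\Rightarrow |\langle Z_i,Z_j'\rangle|<\tau$ (covering side) and $\langle x_i,x_j\rangle\geq\rho\Rightarrow |\langle Z_i,Z_j'\rangle|\geq 3n^{\alpha}\tau$ (packing side). For the covering side, the normalized value $v:=\langle Y_i,Y_j\rangle/d'$ is at most $\gamma(\langle x_i,x_j\rangle)+\delta<\tau^+$, so the Chebyshev argument $u=2(v-\tau^-)/(\tau^+-\tau^-)-1$ lies in $[-1,1]$ and $|T_q(u)|\leq 1$; combined with the scaling factor $d''(\tau^+-\tau^-)^q/2^{3q-1}\leq 2n^{0.2}\cdot n^{-1/25}=2n^{0.16}$ this keeps $|\langle Z_i,Z_j'\rangle|$ strictly below $\tau=3n^{0.16}$. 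For the packing side, the $2/\pi$-Lipschitz monotonicity of $\gamma$ gives $v\geq\gamma(\rho)-\delta\geq\tau^++\epsilon/\pi$, hence $u\geq 1+\Omega(\epsilon)$, and the standard estimate $T_q(1+2x)\geq\tfrac{1}{2}\exp(2q\sqrt{x})$ yields $T_q(u)\geq\tfrac{1}{2}n^{\Theta(\sqrt{\epsilon})}$; multiplying by the same scaling factor exceeds $3n^{\alpha}\tau$ because the Chebyshev exponent $\Theta(\sqrt{\epsilon})$ comfortably dominates $\alpha=\sqrt{\epsilon}/500$. Packing and covering for $(X,\langle\cdot,\cdot\rangle)$ follow by contraposition from Theorem~\ref{ThmCrudeNet}, and the sandwich inclusions for $F$ are read off directly from the corresponding sandwich in Theorem~\ref{ThmCrudeNet}.

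The regime $\rho\leq\epsilon$ is handled by the explicit greedy loop: any two centers $x,x'$ have $\langle x,x'\rangle<\rho-\epsilon\leq\rho$ (packing), every removed vector is within inner product $\rho-\epsilon$ of the center that removed it (covering), and the bookkeeping for $F$ is immediate from the loop condition; since the centers have pairwise non-positive correlation, $|C|=O(d)$ and the loop costs $O(d^2n)$, which is absorbed by the stated bound. Total running time adds $O(dn\log n/\epsilon^2)$ from Theorem~\ref{ThmUnif}, $\tilde{O}(n^{1.2})$ from Theorem~\ref{ThmCheb}, and $\tilde{O}(n^{2-\alpha})$ from {\tt Crude ApprxNet} via Theorem~\ref{ThmCrudeNet} with $d''=n^{0.2}$; the gap between $\alpha=\sqrt{\epsilon}/500$ and the claimed exponent $\sqrt{\epsilon}/600$ absorbs the polylogarithmic slack. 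Success probability follows from a union bound over the $o(1/n^2)$, $o(1/n)$, and $O(1/n^{0.2})$ failure probabilities of the three components.

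The main technical obstacle is the quantitative bookkeeping on the packing side: one has to verify simultaneously that the Chebyshev gain beats the additive noise $\sqrt{d''}\log n$, that the scaling factor $d''(\tau^+-\tau^-)^q/2^{3q-1}$ keeps the ``flat'' regime strictly below $\tau$ while launching the ``spike'' regime above $3n^{\alpha}\tau$, and that the concrete choices $d''=n^{0.2}$, $\tau=3n^{0.16}$, $\alpha=\sqrt{\epsilon}/500$ leave simultaneous slack in all four competing inequalities. Once this balancing is done, the rest is straightforward assembly of Theorems~\ref{ThmUnif}, \ref{ThmCheb}, and \ref{ThmCrudeNet}.
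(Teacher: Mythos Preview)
Your proposal is correct and follows essentially the same approach as the paper: the same case split on $\rho\lessgtr\epsilon$, the same three-stage pipeline (Theorem~\ref{ThmUnif} $\to$ Theorem~\ref{ThmCheb} $\to$ Theorem~\ref{ThmCrudeNet}), the packing bound $|C|\leq d+1$ for pairwise non-positively correlated unit vectors, and the same quantitative balancing of the Chebyshev gain against the additive noise and the $3n^{\alpha}$ blow-up. One correction: the slack between $\alpha=\sqrt{\epsilon}/500$ and the final exponent $\sqrt{\epsilon}/600$ is not there to absorb polylogarithmic factors but to absorb the overhead $\gamma$ in Coppersmith's fast rectangular matrix multiplication, which the paper invokes to bound the $\mathrm{MatrixMul}(n\times d'',\,d''\times n^{1-\alpha})$ term in the runtime of {\tt Crude ApprxNet}.
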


\section{Approximate nets in high dimensions}
\label{SGeneral}
In this section, we translate the problem of computing $r$-nets in $(\RR^d,\|\cdot \|)$ to the problem of computing 
$\rho$-nets for unit vectors under inner product. One intermediate step is that of computing $r$-nets for 
unit vectors under Euclidean distance. 

\subsection{From arbitrary to unit vectors}
In this subsection, we show that if one is interested in 
finding an $r$-net for $(\RR^d,\|\cdot \|)$, it is sufficient 
to solve the problem for points on the unit sphere. One analogous statement is used in \cite{Val15}, where they prove that one can apply a randomized mapping from the general Euclidean space to points on a unit sphere, while preserving the ratio of distances for any two pairs of points. The claim derives by the simple observation that an $r$-net in the initial space can be approximated 
by computing an $\epsilon r/c$-net on the sphere, where $c$ 
is the maximum norm of any given point envisaged as vector. Our exposition is even simpler since we can directly employ the analogous theorem from \cite{Val15}.

\begin{cor}\label{Standardize}
There exists an algorithm, {\tt Standardize}, which, on input a $d \times n$ matrix $X$ with entries $x_{i,j} \in \RR$, a constant $\epsilon \in (0, 1)$ and a distance parameter $r \in \RR$, outputs a $m'\times n $ matrix $Y$, with columns having unit norm and $m'=\log^3n$, and a distance parameter $\rho \in \RR $, such that a $\rho$-net of $Y$ is an approximate $r$-net of $X$, with probability $1-o(1/poly(n))$.
\end{cor}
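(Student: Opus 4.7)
The plan is to invoke the analogous standardization lemma of Valiant~\cite{Val15} essentially as a black box, and then verify that the two properties of Definition~1 pull back cleanly once the embedding preserves pairwise distance ratios up to a small multiplicative factor. The output parameter $\rho$ will just be $r$ times the scale factor of the embedding (with a tiny correction to keep the packing bound exact).

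First, I would apply Valiant's randomized map to $X$ with accuracy parameter $\epsilon' = c_0\epsilon$ for a small absolute constant $c_0>0$. The output is an $m'\times n$ matrix $Y$ with $m'=\log^3 n$ and unit-norm columns, such that with probability $1-o(1/\mathrm{poly}(n))$ there is a scale factor $\lambda$ (computable from $X$, e.g.\ involving $c=\max_i\|X_i\|$) satisfying
\[
(1-\epsilon')\,\lambda\,\|X_i-X_j\|\ \le\ \|Y_i-Y_j\|\ \le\ (1+\epsilon')\,\lambda\,\|X_i-X_j\|
\]
simultaneously for all pairs $i,j\in[n]$. A union bound over $\binom{n}{2}$ pairs is absorbed into the $o(1/\mathrm{poly}(n))$ failure probability that Valiant's statement already provides.

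Second, I would set $\rho=(1+\epsilon')\,\lambda\,r$ and verify that any $(1+\epsilon'')$-approximate $\rho$-net $R$ of $Y$ (with $\epsilon''=c_0\epsilon$) transfers to a $(1+\epsilon)r$-net of $X$ under the indexing by columns. For packing, $i\neq j\in R$ gives $\|Y_i-Y_j\|\ge\rho$, hence $\|X_i-X_j\|\ge\rho/((1+\epsilon')\lambda)=r$. For covering, given any $X_k$, the column index $j$ certifying $\|Y_k-Y_j\|\le(1+\epsilon'')\rho$ yields
\[
\|X_k-X_j\|\ \le\ \frac{(1+\epsilon'')\rho}{(1-\epsilon')\lambda}\ =\ \frac{(1+\epsilon'')(1+\epsilon')}{1-\epsilon'}\,r,
\]
which is at most $(1+\epsilon)r$ provided $c_0$ is small enough.

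The main work is purely bookkeeping: tuning $c_0$ so that the compounded factor $(1+\epsilon'')(1+\epsilon')/(1-\epsilon')$ is at most $1+\epsilon$. No new probabilistic estimate is needed beyond what Valiant supplies, and the dimension $m'=\log^3 n$, the unit-norm property, and the high-probability guarantee are all inherited directly. The only delicate point worth checking is that the scale factor $\lambda$ is exposed by the construction so that $\rho$ can be reported alongside $Y$; this is immediate because the map is explicit (essentially Johnson--Lindenstrauss followed by a deterministic normalization to the sphere), so $\lambda$ is a function of quantities computed along the way. If any subtlety arises, it would be in reconciling how the additive error of the Valiant map interacts with norms approaching $c$; this is handled by shifting to scale $r/c$ and paying only an $O(\epsilon)$ distortion, exactly the ``$\epsilon r/c$-net on the sphere'' observation flagged in the paragraph preceding the corollary.
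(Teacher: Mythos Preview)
Your approach is essentially the same as the paper's: invoke Valiant's standardization result as a black box and then check that packing and covering pull back through a near-isometry. There is one concrete difference worth noting. Valiant's guarantee, as the paper quotes it (Theorem~\ref{Stand}), is a \emph{ratio}-preservation statement over $4$-tuples, not a statement furnishing an explicit global scale $\lambda$. You assume such a $\lambda$ is directly computable from $X$ (or from quantities computed along the way); the paper sidesteps this entirely by appending two artificial columns $X_{n+1},X_{n+2}$ with $\|X_{n+1}-X_{n+2}\|=r$, running the map on the augmented input, and \emph{defining} $\rho=\|Y_{n+1}-Y_{n+2}\|$. The $4$-tuple ratio bound with $(Y_3,Y_4)=(Y_{n+1},Y_{n+2})$ then gives exactly the two implications you derive, without ever naming $\lambda$. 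Your computation of the packing and covering bounds is correct and matches the paper's, so once you either (i) extract $\lambda$ by fixing any one pair as a reference, or (ii) adopt the paper's auxiliary-pair trick, the argument is complete.
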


\subsection{Approximate nets under Euclidean distance}
In this subsection, we show that one can translate the problem of computing an $r$-net for points on the unit sphere under Euclidean distance, to finding an $r$-net for unit vectors under  inner product as defined in Section \ref{SInner}. Moreover, we  identify the subset of the $r$-net which contains 
the centers that are approximately far from any other point. Formally,

\begin{dfn}
Given a set of points $X$ and $\epsilon>0$, 
a set $F\subseteq X$ of $(1+\epsilon)$-approximate $r$-far points is defined by the following property:  $\{x\in X \mid \forall x \neq y \in X ~ \|x-y\| > (1+\epsilon)r \}\subseteq F \subseteq \{x\in X \mid \forall x \neq y \in X ~ \|x-y\| > r \}$.
\end{dfn}

If $r$ is greater than some constant, the problem can be immediately solved by the law of cosines. If $r$ cannot be considered as constant, we distinguish cases $r\geq 1/n^{0.9}$ and $r <1/n^{0.9}$. The first case 
is solved by a simple modification of an analogous algorithm in \cite[p.13:28]{Val15}. The second case is not straightforward and requires partitioning the pointset in a manner which allows computing $r$-nets for each part separately. Each part has bounded diameter which implies that we need to solve a ``large $r$" subproblem.

\begin{thm}\label{ThmLargeRadius}
There exists an algorithm, {\tt ApprxNet(Large radius)}, which, 
for any constant $\epsilon\in (0,1/2]$, $X\subset \SS^{d-1}$ s.t. $|X|=n$, outputs
a $(1+\epsilon)r$-net and a set of $(1+\epsilon)$-approximate $r$-far points with probability $1-O(1/n^{0.2})$. Additionally, 
provided $r>1/n^{0.9}$ the runtime of the algorithm is $\tilde{O}(dn^{2-\Theta(\sqrt[]{\epsilon})})$. 
\end{thm}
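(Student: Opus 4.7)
The plan is to reduce the problem to the inner product net framework of Theorem~\ref{AppIPnet}. For unit vectors $x,y \in \SS^{d-1}$, the identity $\|x-y\|^2 = 2 - 2\langle x, y\rangle$ yields
\[
\|x-y\| \le r \iff \langle x,y\rangle \ge \rho, \qquad \|x-y\| \le (1+\epsilon)r \iff \langle x,y\rangle \ge \rho - \epsilon',
\]
where $\rho := 1 - r^2/2$ and $\epsilon' := (2\epsilon+\epsilon^2)r^2/2 = \Theta(\epsilon r^2)$. Consequently, any approximate $\rho$-net with additive error $\epsilon'$ in the sense of Definition~\ref{DfnNetInn} is precisely a $(1+\epsilon)r$-net under Euclidean distance, and the set $F$ returned by {\tt Inner product ApprxNet}, whose defining inclusions translate term-for-term under this reduction, is exactly a set of $(1+\epsilon)$-approximate $r$-far points.

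When $r$ is bounded below by a positive constant, $\rho$ is bounded away from $1$ and $\epsilon' = \Theta(\epsilon)$ is itself constant-sized. A direct invocation of {\tt Inner product ApprxNet}$(X,\rho,\epsilon')$ then yields both $C$ and $F$ in time $\tilde{O}(dn + n^{2-\sqrt{\epsilon'}/600}) = \tilde{O}(dn^{2-\Theta(\sqrt{\epsilon})})$, with success probability $1-O(1/n^{0.2})$ by Theorem~\ref{AppIPnet}.

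The main obstacle is the regime $1/n^{0.9} \le r \ll 1$, where $\epsilon' = \Theta(\epsilon r^2)$ is polynomially small, so that a black-box application of Theorem~\ref{AppIPnet} would give only the essentially trivial exponent $2 - \Theta(r\sqrt{\epsilon})$. To handle this we follow the ``simple modification'' of the analogous routine in \cite{Val15} and instead retune the thresholds and degree inside the Chebyshev embedding step of {\tt Inner product ApprxNet}. After the Charikar map of Theorem~\ref{ThmUnif}, the embedded values corresponding to $\rho$ and $\rho-\epsilon'$ differ by $\Theta(\epsilon r)$, since $\cos^{-1}(\rho-\epsilon') - \cos^{-1}(\rho) \approx \epsilon'/r$. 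We therefore place $\tau^{+}$ and $\tau^{-}$ to straddle this (now $r$-dependent) gap and raise the Chebyshev degree $q$, so that the amplification factor $d'(\tau^{+}-\tau^{-})^q/2^{3q-1}$ from Theorem~\ref{ThmCheb} exceeds the $\sqrt{d'}\log n$ additive error by a constant and produces a constant-sized separation in the embedded space.

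The hypothesis $r > 1/n^{0.9}$ is exactly what keeps the required $q$ and $d'$ polynomial in $n$, so the subsequent call to {\tt Crude ApprxNet} can still be run with $\alpha = \Theta(\sqrt{\epsilon})$ independent of $r$, and the vector aggregation step preserves the target exponent $n^{2-\Theta(\sqrt{\epsilon})}$. Correctness of the returned $C$ and $F$ then follows from the Euclidean-inner product equivalence above, and a union bound over Theorems~\ref{ThmUnif}, \ref{ThmCheb}, and~\ref{ThmVecAgg} gives the stated $1-O(1/n^{0.2})$ success probability in both regimes.
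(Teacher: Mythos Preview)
Your treatment of the case $r \ge \Omega(1)$ is correct and matches the paper's: the additive inner-product error is $\epsilon' = \Theta(\epsilon)$, and a direct call to {\tt Inner product ApprxNet} suffices.

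For the regime $1/n^{0.9} < r \ll 1$, however, the proposed fix does not deliver the claimed runtime. You suggest keeping the structure of {\tt Inner product ApprxNet} and retuning the Chebyshev parameters to absorb the $\Theta(\epsilon r)$ gap that appears after the Charikar map. The obstruction is the Charikar map itself: for that gap to survive, Theorem~\ref{ThmUnif} must be applied with accuracy $\delta = O(\epsilon r)$, which forces embedding dimension $d' = \Theta(\log n/(\epsilon r)^2)$ and hence runtime $\Theta(dn\log n/(\epsilon r)^2)$. At $r = n^{-0.9}$ this is already $\tilde{\Theta}(dn^{2.8})$, destroying the target exponent. Increasing the Chebyshev degree $q$ does not rescue the argument: with $\tau^{+}-\tau^{-}\approx 2$ one would need $q = \Theta(\log n/\sqrt{r})$, and then the output dimension $d''$ required to keep the baseline amplitude $d''(\tau^{+}-\tau^{-})^{q}/2^{3q-1}$ above the $\sqrt{d''}\log n$ noise becomes $n^{\Theta(1/\sqrt{r})}$, which is super-polynomial. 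Shrinking $\tau^{+}-\tau^{-}$ to the scale of the gap only makes this signal-to-noise tradeoff worse. In short, ``$q$ and $d'$ polynomial in $n$'' is not the right criterion; what is needed is that every stage stays within $\tilde{O}(n^{2-\Theta(\sqrt{\epsilon})})$, and your retuning does not achieve this.

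The paper handles small $r$ by a different, and essential, preprocessing step \emph{before} invoking {\tt Inner product ApprxNet} with \emph{constant} parameters. For each output coordinate it draws $q = \lfloor \pi/(2\theta_r)\rfloor = \Theta(1/r)$ random hyperplanes $v_1,\dots,v_q$ and sets $z_{i,j} = \mathrm{sign}\prod_{k=1}^{q}\langle X_j,v_k\rangle$. Since each factor has expectation $1 - 2\theta(X_i,X_j)/\pi$, one gets $\mathbb{E}[z_{k,i}z_{k,j}] = (1 - 2\theta(X_i,X_j)/\pi)^{q}$; with this $q$, close pairs land in $[0.3,0.5]$ while far pairs are a $(1-\Theta(\epsilon))$-factor below, so the additive gap becomes $\Theta(\epsilon)$ \emph{independently of $r$}. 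This product-of-signs embedding costs only $\tilde{O}(nq) = \tilde{O}(n/r) \le \tilde{O}(n^{1.9})$ (using $d = \log^{3} n$ after {\tt Standardize}), after which the standard {\tt Inner product ApprxNet} with error $\epsilon/100$ yields the result. The missing ingredient in your argument is precisely this cheap gap-amplification step; the Chebyshev stage alone cannot absorb the $r$-dependence within the time budget.
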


Let us now present an algorithm which translates the problem of finding an $r$-net for $r<1/n^{0.9}$ to the problem of computing an $r$-net for $r\geq 1/n^{0.9}$. The main idea is that we compute disjoint subsets $S_i$, which are far enough from each other, so that we can compute $r$-nets for each $S_i$ independently. We show that for each $S_i$ we can compute 
$T_i \subseteq S_i$ which has bounded diameter and 
$T_i'\subseteq S_i$ such that $T_i$, $T_i'$ are disjoint, each point in $T_i$ is far from each point in $T_i'$, and $|T_i'|\leq 3|S_i|/4$. It is then easy to find $r$-nets for $T_i$ by employing the ApprxNet(Large radius) algorithm. Then, we recurse on $T_i'$ which contains a constant fraction of points from $|S_i|$. Then, we cover points in 
$S_i \setminus(T_i \cup T_i')$ and points which do not belong to any $S_i$.

\begin{framed}
{\tt ApprxNet(Small radius)}\\

  Input: $X =[x_1,\ldots,x_n]^T $ with each $x_i \in \SS^{d-1}$, $r< 1/n^{0.9}$, $\epsilon \in (0,1/2]$.

Output: Sets $R, F \subseteq [n]$.

\begin{enumerate}
\item Project points on a uniform random unit vector 
and consider projections $p_1,\ldots,p_n$ which 
wlog correspond to $x_1,\ldots,x_n\in \RR^d$.
\item 
Traverse the list as follows:
\begin{itemize}
\item If $|\{j  \mid p_j \in [p_i-r,p_i] \}| \leq n^{0.6}$ or $i=n$:
\begin{itemize}
\item If $|\{j  \mid  p_j <p_i  \}| \leq n^{0.9}$ remove 
from the list all points $p_j$ s.t. $p_j<p_i-r$ and save set $K=\{x_j \mid p_j\in [p_i-r,p_i] \}$.
\item If $|\{j  \mid p_j <p_i\}| > n^{0.9}$ save sets $K_i=\{x_j \mid p_j\in [p_i-r,p_i] \} \cup K$,
$S_i=\{x_j\mid p_j<p_i-r\}\setminus K$ and remove projections of $S_i$ and $K_i$ from the list.
\end{itemize}
\end{itemize}
\item After traversing the list if we have not saved any $S_i$ go to 5; otherwise for each $S_i$:
\begin{itemize}
\item For each $u\in S_i$, sample 
$n^{0.1}$ distances between $u$ and randomly chosen 
$x_k\in S_i$. Stop if for the selected $u\in S_i$, more than $1/3$ of the sampled points are in distance $\leq  r n^{0.6}$. This means that 
one has found $u$ s.t. 
$|\{x_k \in S_i, \|u-x_k\|\leq  r n^{0.6}\}| \geq |S_i|/4$ with high probability. If no such point was found, output "ERROR".
\item Let $0\leq d_1\leq \ldots\leq d_{|S_i|}$ be the 
distances between $u$ and all other points in $S_i$. 
Find $c\in[ r n^{0.6},2r  n^{0.6}]$ s.t. 
$|\{j \in [n] \mid d_j \in [c,c+r] \}| <n^{0.4}$,
 store $W_i=\{x_j \mid d_j \in [c,c+r] \}$, 
 and remove $W_i$ from $S_i$. 
\item Construct the sets $T_i=\{x_j \in S_i \mid d_j<c\}$ 
and $T_i'=\{x_j \in S_i \mid d_j > c+r\}$. 
\begin{itemize}
\item For $T_i$, subtract $u$ from all vectors in $T_i$, run {\tt Standardize}, then {\tt ApprxNet (Large radius)}, both with {$\epsilon/4$}. Save points which correspond to output at $R_i$, $F_i$ respectively.
\item Recurse on $T_i'$ the whole algorithm, and notice that $|T_i'|\leq 3 |S_i|/4$. Save output at $R_i'$, and $F_i'$ respectively.
\end{itemize}
\end{itemize}
\item 
Let $R \gets \bigcup_i R_i \cup R_i'$ and 
$F \gets \bigcup_i F_i \cup F_i'$. Return to the list $p_1,\ldots,p_n$.
\begin{enumerate}
\item {Remove from $F$ all points which cover at least one point from $\bigcup_i W_i$ or $\bigcup_i K_i$.}
\item \label{itm:deleteb}
Delete all points $ (\bigcup_i T_i) \setminus (\bigcup_i R_i)$, and $ (\bigcup_i T_i') \setminus (\bigcup_i R_i')$.
\item \label{itm:deletec}For each $i$ delete all points in $W_i$ covered by $R_i$, or covered by $R_i'$. 
\item \label{itm:deleted}For each $i$ delete all points in $K_i$ covered by $R$.
\item Finally delete $R$ from the list. Store the remaining points at $F'$.
\end{enumerate}
\item $R' \gets \emptyset$. Traverse the list as follows: For each $p_i$, check the distances from all $x_j$ s.t. 
$p_j\in [p_i-r,p_i]$.
\begin{itemize}
\item If $\exists\, x_j \in R' :$ $\|x_i-x_j\| \leq r$, delete $x_i$ from the list, set $F' \gets F' \backslash \{x_i , x_j\}$  and continue traversing the list. 
\item If there is no such point $x_j$ then $R \gets R \cup \{x_i\}$ and continue traversing the list.
\end{itemize}
\item Output indices of $R\gets R \cup R'$ and $F \gets F \cup F'$.
\end{enumerate}
\end{framed}

\begin{thm}\label{ThmSmallr}
For any constant $\epsilon>0$, $X\subset \SS^{d-1}$ s.t. 
$|X|=n$, and $r < 1/n^{0.9}$, {\tt ApprxNet(Small radius)} will output
a $(1+\epsilon)r$-net and a set of $(1+\epsilon)$-approximate $r$-far points in time $\tilde{O}(dn^{2-\Theta(\sqrt[]{\epsilon})})$, with probability $1-o(1/n^{0.04})$.
\end{thm}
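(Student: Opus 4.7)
The plan is to prove correctness (the output is a $(1+\epsilon)r$-net of $X$ and the $F$ set has the desired sandwich property) together with the runtime and probability bounds by induction on the recursion depth. The skeleton is that step~2 partitions most of the pointset into slabs $S_i$ of size at least $n^{0.9}$, separated from one another by buffer strips $K_i$ of projection-width $r$. Two points whose projections differ by more than $r$ have Euclidean distance greater than $r$ as well, so different slabs $S_i$ and $S_j$ are pairwise more than $r$ apart; this lets me combine nets for the individual $S_i$'s without violating the packing property, and it also means that the covering guarantee of each $R_i$ only needs to be established inside its own slab.

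For each slab I would justify three claims. First, the sampling loop finds a center $u\in S_i$ with $|B(u, r n^{0.6})\cap S_i|\geq |S_i|/4$ with probability $1-o(1/n^{10})$: the projection-width of a slab is $O(r n^{0.4})$ (we only cut when the local density drops below $n^{0.6}/r$, so a slab of size $m$ spans at most $mr/n^{0.6}$ in projection), hence a constant fraction of $u\in S_i$ are ``central'' and witness $|S_i|/4$ nearby points; a Chernoff bound on the $n^{0.1}$ samples per candidate detects such a $u$ and a union bound over at most $n$ candidates controls the ERROR probability. Second, a pigeonhole argument on the $\Theta(n^{0.2})$ disjoint width-$r$ shells of $[r n^{0.6}, 2 r n^{0.6}]$ yields a sparse shell $W_i$ with $|W_i|<n^{0.4}$. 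Third, $T_i$ has Euclidean diameter at most $2(c+r)=O(r n^{0.6})$, so after translation by $-u$ and the call to {\tt Standardize} the resulting distance parameter $\rho$ lies safely above $1/n^{0.9}$, and Theorem~\ref{ThmLargeRadius} applied with $\epsilon/4$ produces valid $R_i, F_i$. The $T_i'$ branch contains at most $3|S_i|/4$ points, giving recursion depth $O(\log n)$, and the nets on $T_i$ and $T_i'$ can be concatenated because $W_i$ separates them by Euclidean distance exceeding $r$; the geometric sum of $\epsilon/4$-errors across $O(\log n)$ levels is absorbed into the overall $(1+\epsilon)$.

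The final bookkeeping in step~4 removes from $F$ any candidate that turns out to have a close buffer neighbor and deletes points already covered by $R$; the greedy sweep in step~5 then covers the surviving buffer points $F'\cup\bigcup_i W_i\cup\bigcup_i K_i$ by checking only the $O(n^{0.6})$-point projection windows on either side of each point, so it costs $O(dn^{1.6})$. The main runtime cost comes from the $O(\log n)$ recursive invocations of {\tt ApprxNet(Large radius)}, which are issued per level on disjoint sub-pointsets of total size at most $n$ with $\rho\geq 1/n^{0.9}$, costing $\tilde{O}(dn^{2-\Theta(\sqrt{\epsilon})})$ per level by Theorem~\ref{ThmLargeRadius}; summing over levels yields the claimed bound. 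Probabilistically, I union-bound the projection-concentration event, the $O(\log n)$ sampling-success events, and the $O(\log n)$ calls to Theorem~\ref{ThmLargeRadius}, each failing with probability at most $O(1/n^{0.2})$, for total failure $o(1/n^{0.04})$. The main obstacle I anticipate is justifying that the Euclidean distances inside a slab $S_i$ are really controlled by its projection-width: because the cuts are data-dependent, the random projection cannot be analyzed per-pair after the fact, so I would fix the random unit direction first and argue that a single high-probability concentration event holds simultaneously for all pairs before invoking the slab partition to produce the center-existence claim.
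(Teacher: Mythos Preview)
Your outline matches the paper's architecture closely, but the step you flag as the ``main obstacle'' is indeed a genuine gap, and your proposed fix does not work. You suggest fixing the random direction first and establishing ``a single high-probability concentration event [that] holds simultaneously for all pairs.'' For a one-dimensional random projection this is impossible at the scale you need: the per-pair tail bound (essentially the one-dimensional Johnson--Lindenstrauss estimate $\Pr[\langle u,v\rangle^2\le \|u\|^2/n^{0.4}]\le \sqrt{ed}/n^{0.2}$) is only polynomially small, so a union bound over $\Theta(n^2)$ pairs blows up. There is no simultaneous good event here.

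The paper sidesteps this by a first-moment argument rather than a uniform one. It computes the \emph{expected} number of ``bad'' pairs in $S_i$ (pairs with Euclidean distance $\ge n^{0.6}r$ but projection distance $\le n^{0.4}r$), bounds it by $|S_i|^2\sqrt{ed}/n^{0.2}$, and then applies Markov's inequality to conclude that with probability $1-O(1/n^{0.15})$ the number of bad pairs is at most $|S_i|^{1.95}$. Since $\binom{|S_i|}{2}-|S_i|^{1.95}\ge |S_i|^2/4$ pairs are then genuinely close in Euclidean distance, a counting argument produces a point $u$ covering $|S_i|/2$ points at radius $n^{0.6}r$. This is the missing ingredient in your proposal, and it is also why the overall failure probability is only $o(1/n^{0.04})$ rather than the $o(1/n^{10})$ you claim for the center-finding step. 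A secondary minor point: the $\epsilon/4$ errors do not accumulate geometrically across recursion levels, since $T_i$ and $T_i'$ are disjoint and each receives its own net; the $\epsilon/4$ is spent once on {\tt Standardize} and once on {\tt ApprxNet(Large radius)}.
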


\begin{proof}
Note that points in $S_i$ had projections
$p_i$ in sets of contiguous intervals of width $r$; each interval had $\geq n^{0.6}$ points,
hence the diameter of the projection of $S_i$ is $\leq n^{0.4}r$.
By the Johnson Lindenstrauss Lemma \cite{DG02} we have that for $v \in \SS^{d-1}$ chosen uniformly at random:
$$\Pr\Big[\langle u,v\rangle^{2}\leq \frac{\|u\|^2}{n^{0.4}}\Big]\leq \frac{\sqrt{d} \sqrt{e}}{n^{0.2}}.
$$
Hence,
$
\EE[| \{ x_k,x_j\in S_i \mid \|x_k-x_j\| \geq n^{0.6}r \text{ and }  \|p_k-p_j\|\leq n^{0.4} r\}|]
\leq |S_i|^2 \cdot  \frac{\sqrt{e d}}{n^{0.2}},
$
and the probability
$$
\Pr[| \{ x_k,x_j\in S_i \mid \|x_k-x_j\| \geq n^{0.6}r \text{ and }  \|p_k-p_j\|\leq n^{0.4} r\}| \geq |S_i|^{1.95}] \leq |S_i|^{0.05} \cdot  \frac{\sqrt{e d}}{n^{0.2}}\leq \frac{\sqrt{e d}}{n^{0.15}}.
$$
Taking a union bound over all sets $S_i$ yields a  probability of failure $o({1}/{n^{0.045}})$. 
This implies that (for large enough $n$, which implies large enough $|S_i|$) at least 
$${\binom{|S_i|}{2}} -|S_i|^{1.95}\geq {\frac{|S_i|^2}{4}}$$ 
distances between points in $S_i$ are indeed small 
($\leq n^{0.6}r$). Hence, there exists some point $p_k \in S_i$ which $(n^{0.6}r)$-covers $|S_i|/2$ points. For each possible $p_k$ 
we sample $n^{0.1}$ distances to other points, and by {Chernoff bounds}, if a point $(n^{0.6}r)$-covers a fraction of more than $1/2$ of the points in $S_i$, then it covers more than 
$n^{0.1}/3$ sampled points with high probability. Similarly, if 
a point $(n^{0.6}r)$-covers a fraction of less than $1/4$ of the points in $S_i$, then it covers less than 
$n^{0.1}/3$ sampled points with high probability.
More precisely, for some fixed $u\in S_i$, let $X_j=1$ 
when for the $j$th randomly chosen point $v\in S_i$, it holds 
$\| u-v\| \leq  n^{0.6}r$ and let $X_j=0$ otherwise. Then, for $Y=\sum_{j=1}^{n^{0.1}} X_j$, it holds:
$$
\EE[Y]\geq n^{0.1}/2 \implies \Pr[Y\leq n^{0.1}/3 ]\leq \exp(- \Theta(n^{0.1})),
$$
$$
\EE[Y]\leq n^{0.1}/4 \implies \Pr[Y\geq n^{0.1}/3]\leq \exp(- \Theta(n^{0.1})).
$$
Since for any point  $x\in T_i$ and any point $y \in T_i'$ we have $\|x-y\|>r$, the packing property of $r$-nets is preserved when we build $r$-nets for $T_i$ and $T_i'$ independently. For each $T_i$, we succeed in building $r$-nets with probability $1-O(1/n^{0.2})$. By a union bound over all sets $T_i$, we have a probability of failure $O(1/n^{0.1})$. 
Furthermore, points which belong to sets $W_i$ and $K_i$ are possibly covered and need to be checked.

For the analysis of the runtime of the algorithm, notice that step \ref{itm:deleteb} costs time 
$O(d\cdot (\sum_i|T_i|+\sum_i|T_i'|))=O(dn)$. Then, 
step \ref{itm:deletec} costs time $O(d \cdot \sum_i |W_i|\cdot |T_i|+d \cdot \sum_i |W_i|\cdot |T_i'|)=O(d n^{1.4})$. Finally, notice that we have at most $n^{0.1}$ sets $K_i$. Each $K_i$ contains at most $2n^{0.6}$ points, hence checking each point in $\bigcup_i K_i$ with each point in $R$ costs $O(d n^{1.7})$.

Now regarding step 5, consider any interval $[p_i-r,p_i]$ in the initial list, where all points are projected. If $|\{ j \mid p_j \in [p_i-r,p_i]\}\leq 2 n^{0.9}$ then the $i$th iteration in step 5 will obviously cost $O(n^{0.9})$, since previous steps only delete points. If $|\{ j \mid p_j \in [p_i-r,p_i]\}> 2 n^{0.9}$, we claim that 
$|\{j<i \mid  p_j \in [p_i-r,p_i] \text{ and } K_j \text{ is created}\}| \leq 1$. Consider the smallest $j <i$ s.t. $K_j$ is created and $p_j\in [p_i-r,p_i]$. This means that all points $p_k$, for $k\leq j$, are deleted when $p_j$ is visited. Now assume that there exists integer $l \in (j,i)$ s.t. $K_l$ is created. This means that the remaining points in the interval $[p_l-r,p_l]$ are  
$\leq n^{0.6}$ and all of the remaining points $p_k <p_l$ are more than $n^{0.9}$. This leads to contradiction, since by the deletion in the $j$th iteration, we know that all of the remaining points $p_k <p_l$ lie in the interval $[p_l-r,p_l]$. 

Now, assume that there exists one $ j<i$ s.t. $p_j \in [p_i-r,p_i]$ and $K_j$ is created. Then, when $p_i$ is visited, there at least $2 n^{0.9}-n^{0.6}>n^{0.9}$ remaining points in the interval $[p_i-r,p_i]$. Hence, there exists $l\geq i$  for which 
the remaining points in the interval $[p_i-r,p_i]$ 
are contained in $S_l \cup K_l$. Hence in this case, in step 5, there exist at most $O(n^{0.6})$ points which are not deleted and belong to the interval $[p_i-r,p_i]$. Now assume that there does not exist any $ j<i$ s.t. $p_j \in [p_i-r,p_i]$ and $K_j$ is created. This directly implies that  there exists $l\geq i$  for which 
the remaining points in the interval $[p_i-r,p_i]$ 
are contained in $S_l \cup K_l$.

At last, the total time of the above algorithm is dominated by the calls to the construction of the partial $r$-nets of the sets $T_i$. Thus, the total running time is $O(\sum_{ i} {|T_i|}^{2-\Theta(\sqrt{\epsilon})}+\sum_{ i} {|T_i|'}^{2-\Theta(\sqrt{\epsilon})})=
O(\sum_{ i} {|T_i|}^{2-\Theta(\sqrt{\epsilon}})+\sum_{i} {(3|T_i|/4)}^{2-\Theta(\sqrt{\epsilon})})=
\tilde{O}(n^{2-\Theta(\sqrt{\epsilon}))}).$
{Finally, taking a union bound over all recursive calls of the algorithm we obtain a probability of failure $o(1/n^{0.04})$.} 
\end{proof}

We now present an algorithm for an 
$(1+\epsilon)r$-net for points in $\RR^d$ under Euclidean distance. 

\begin{framed}
{\tt ApprxNet}\\

Input: Matrix $X=[x_1,\ldots,x_n]$ with each $x_{i} \in \RR^d$, parameter $r \in \RR$, constant $\epsilon \in (0, 1/2]$.

Output: $R \subseteq \{x_1,\ldots,x_n\}$ 

\begin{itemize}
\item  Let $Y$, $r'$ be the output of algorithm {\tt Standardize} on input $X$, $r$ with parameter $\epsilon/4$.
\item If $r \geq 1/n^{0.9}$ run {\tt ApprxNet(Large radius)} on input $Y$,  $\epsilon/4, r'$ and return points which correspond to  the set $R$.
\item If $r < 1/n^{0.9}$ run {\tt ApprxNet(Small radius)} on input $Y$,  $\epsilon/4, r'$ and return points which correspond to the set $R$.
\end{itemize}
\end{framed}

\begin{thm}\label{ApprxNet}
Given $n$ points in $\RR^d$, a distance parameter $r \in \RR$ and an approximation parameter $\epsilon \in (0, 1/2]$, with probability $1-o(1/n^{0.04})$, {\tt ApprxNet} will return a $(1+\epsilon)r-net$, $R$, in $\tilde{O}(dn^{2-\Theta(\sqrt{\epsilon})})$ time.
\end{thm}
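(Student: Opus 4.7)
The plan is to observe that {\tt ApprxNet} is essentially a dispatcher: it normalizes the input via {\tt Standardize} and then invokes whichever of the two sub-routines is appropriate for the regime of $r$, both of which already carry the heavy lifting. Hence the proof is an application of Corollary~\ref{Standardize} together with Theorem~\ref{ThmLargeRadius} or Theorem~\ref{ThmSmallr}, the main task being to verify that parameters and failure probabilities compose correctly.

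First I would argue correctness. Applying {\tt Standardize} with approximation parameter $\epsilon/4$ produces, with probability $1 - o(1/\mathrm{poly}(n))$, a matrix $Y$ with columns on $\mathbb{S}^{m'-1}$ (where $m' = \log^3 n$) together with a scaled distance parameter $r'$ such that any $r'$-net of $Y$ pulls back to an approximate $r$-net of $X$ with multiplicative distortion at most $1 + \epsilon/4$. Next, depending on whether $r \ge 1/n^{0.9}$ or $r < 1/n^{0.9}$, I invoke Theorem~\ref{ThmLargeRadius} or Theorem~\ref{ThmSmallr} respectively on $(Y, r', \epsilon/4)$, each of which returns a $(1+\epsilon/4)r'$-net of $Y$. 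Composing the two approximation factors gives $(1+\epsilon/4)^2 \le 1 + \epsilon$ for $\epsilon \in (0,1/2]$, which is the desired covering guarantee; the packing property is inherited directly since {\tt Standardize} preserves the separation threshold up to the same factor.

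For the probability bound I would take a union bound over the failure event of {\tt Standardize}, which is $o(1/\mathrm{poly}(n))$, and the failure event of the invoked net sub-routine. In the large-radius branch the latter is $O(1/n^{0.2})$ and in the small-radius branch it is $o(1/n^{0.04})$, so the dominant term is $o(1/n^{0.04})$, matching the stated success probability. For the runtime I would note that {\tt Standardize} runs in time $\tilde{O}(dn)$ (a Johnson–Lindenstrauss projection plus a translation and normalization), after which the subsequent call operates in ambient dimension $m' = \log^3 n$, which is absorbed into the $\tilde{O}$ notation. Thus the total cost is dominated by the sub-routine, namely $\tilde{O}(dn + m' n^{2-\Theta(\sqrt{\epsilon})}) = \tilde{O}(dn^{2-\Theta(\sqrt{\epsilon})})$.

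The only non-mechanical point, and therefore the one I would be careful about, is the composition of the two $(1+\epsilon/4)$ factors: one incurred when passing to the sphere and one incurred inside the sub-algorithm. Choosing $\epsilon/4$ (rather than $\epsilon/2$) gives enough slack that $(1+\epsilon/4)^2 \le 1+\epsilon$ throughout the allowed range $\epsilon \in (0,1/2]$, so no recalibration of sub-routine parameters is needed. Everything else is routine invocation of the two earlier theorems.
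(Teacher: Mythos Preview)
Your proposal is correct and follows exactly the approach of the paper, which simply states that the result is a direct implication of Theorems~\ref{ThmLargeRadius}, \ref{ThmSmallr}, and the {\tt Standardize} guarantee. You have merely (and helpfully) spelled out the parameter composition, the union bound on failure probabilities, and the runtime accounting that the paper leaves implicit.
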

\begin{proof}
The theorem is a direct implication of Theorems \ref{ThmLargeRadius}, \ref{ThmSmallr}, \ref{Stand}.
\end{proof}

\begin{thm}\label{DelFar}
Given $X\subset\RR^d$ such that $|X|=n$, a distance parameter $r \in \RR$ and an approximation parameter $\epsilon \in (0, 1/2]$, there exists an algorithm, {\tt DelFar}, that will return, with probability $1-o(1/n^{0.04})$, a set $F'$ with the following properties in $\tilde{O}(dn^{2-\Theta(\sqrt{\epsilon})})$ time:
\begin{itemize}
\item If for a point $p \in X$ it holds that  $\forall q\neq p, q \in X$ we have $\|p-q\| > (1+\epsilon)r$, then $p \notin F'$.
\item If for a point $p \in X$ it holds that  $\exists q\neq p, q \in X$ s.t. $\|p-q\| \leq r$, then $p \in F'$.
\end{itemize}
\end{thm}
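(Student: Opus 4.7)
The plan is to piggy-back on the net construction. Although the stated signature of {\tt ApprxNet} returns only $R$, its two sub-routines {\tt ApprxNet(Large radius)} and {\tt ApprxNet(Small radius)} each also produce an auxiliary set $F$ of $(1+\epsilon)$-approximate $r$-far points (Theorems~\ref{ThmLargeRadius} and~\ref{ThmSmallr}). I would define {\tt DelFar} to invoke {\tt ApprxNet} on input $X, r, \epsilon$, retain this $F$, and output $F' := X \setminus F$.

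For correctness, I would unpack the defining sandwich
\[
\{x \in X \mid \forall y \neq x,\ \|x-y\| > (1+\epsilon)r\} \;\subseteq\; F \;\subseteq\; \{x \in X \mid \forall y \neq x,\ \|x-y\| > r\}
\]
and take complements in $X$ to get
\[
\{x \in X \mid \exists y \neq x,\ \|x-y\| \leq r\} \;\subseteq\; F' \;\subseteq\; \{x \in X \mid \exists y \neq x,\ \|x-y\| \leq (1+\epsilon)r\}.
\]
The left inclusion is exactly the second bullet of the theorem: every $p$ that has a neighbor within distance $r$ lies in $F'$. The right inclusion, read contrapositively, is the first bullet: no point that is more than $(1+\epsilon)r$ away from every other point can appear in $F'$. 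Hence $F'$ satisfies both required properties.

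The runtime and failure probability transfer verbatim from Theorem~\ref{ApprxNet}: forming $X \setminus F$ is a single $O(n)$ sweep, which is dominated by the $\tilde{O}(dn^{2-\Theta(\sqrt{\epsilon})})$ net construction, and the same $1 - o(1/n^{0.04})$ success probability applies. I do not anticipate any real obstacle, since the whole argument is a complementation and the heavy lifting is done upstream. The only point that deserves explicit verification when writing the full proof is that the $F$ emitted by the small-radius branch---stitched together from the $F_i, F_i'$ of its recursive and large-radius sub-calls, after the prunings in step~4 of {\tt ApprxNet(Small radius)} that remove any center covering a leftover point of $\bigcup_i W_i$ or $\bigcup_i K_i$---does indeed meet the sandwich globally on $X$, and not merely inside each partition block. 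This is implicit in the analysis of Theorem~\ref{ThmSmallr}, so it should be a short bookkeeping argument rather than a new obstacle.
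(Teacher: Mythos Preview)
Your proposal is correct and is essentially the paper's own argument: the paper's {\tt DelFar} likewise calls {\tt Standardize} and then the appropriate large- or small-radius routine, and returns $F' = X\setminus F$, invoking Theorems~\ref{ThmLargeRadius}, \ref{ThmSmallr}, and Corollary~\ref{Standardize} for correctness, runtime, and success probability. Your explicit complementation of the far-point sandwich is in fact more detailed than the paper's one-paragraph justification.
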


\section{Applications and Future work}\label{Sapps}

Concerning applications, in \cite{HR15}, they design an approximation scheme, which solves various
distance optimization problems. The technique employs a grid-based construction of $r$-nets which is linear in $n$, but exponential in $d$. The main prerequisite of the method is the existence of 
a linear-time decider (formally defined in Appendix~\ref{Aframework}). The framework is especially interesting when the dimension is constant, since the whole algorithm costs time linear in $n$ which, for some problems, improves upon previously known near-linear algorithms. 
When the dimension is high, we aim for polynomial dependency on $d$, and subquadratic dependency on $n$.


Let us focus on the problem of approximating the {\it $k$th nearest neighbor distance}. 
\begin{dfn}
Let $X\subset \RR^d$ be a set of $n$ points, approximation error $\epsilon>0$, and let $d_1\leq \ldots \leq d_n$ be the nearest neighbor distances. The problem of computing an $(1+\epsilon)$-approximation to the {\it $k$th nearest neighbor distance} asks for a pair 
$x,y\in X$ such that $\|x-y\|\in [(1-\epsilon)d_k,(1+\epsilon)d_k]$.
\end{dfn}

Now we present an approximate decider for the problem above. 
This procedure combined with the framework we mentioned earlier, which employs our net construction, results in an efficient solution for this problem in high dimension. 
\begin{framed}
{\tt kth NND Decider}\\

Input: $X \subseteq \RR^d$, constant $\epsilon\in (0,1/2]$, integer $k>0$.

Output: An interval for the optimal value $f(X, k)$.
\begin{itemize}
\item Call {\tt DelFar}$(X, \frac{r}{1+\epsilon/4}, \epsilon/4)$ and store its output in $W_1$.
\item Call {\tt DelFar}$(X, r, \epsilon/4)$ and store its output in $W_2$.
\item Do one of the following:
\begin{itemize}
\item If $|W_1| > k$, then output $``f(X, k) < r"$.
\item If $|W_2| < k$, then output $``f(X, k) > r"$.
\item If $|W_1| \leq k$ and $\abs{W_2} \geq k$, then output $``f(X, k) \in [\frac{r}{1+\epsilon/4}, \frac{1+\epsilon/4}r]"$.
\end{itemize}
\end{itemize}
\end{framed}
 
\begin{thm}\label{KND}
Given a pointset $X \subseteq \RR^d$, one can compute a $(1+\epsilon)$-approximation to the $k$-th nearest neighbor in $\tilde{O}(dn^{2-\Theta(\sqrt{\epsilon})})$, with probability $1-o(1)$.
\end{thm}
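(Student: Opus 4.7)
The plan is to assemble the theorem from two pieces: first, to verify correctness and running time of the {\tt kth NND Decider} that is already stated in the excerpt, and second, to plug this decider together with {\tt ApprxNet} into the framework of \cite{HR15} (the version sketched in Appendix~\ref{Aframework}), which turns an approximate decider plus an $r$-net construction into an approximate solver for distance optimization problems.

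For the decider, I would introduce the counting function $N(t) := |\{p\in X : \exists\, q\neq p,\ \|p-q\|\leq t\}|$, so that by definition $d_k \leq t$ iff $N(t) \geq k$. Theorem~\ref{DelFar} guarantees
\[
N\!\left(\tfrac{r}{1+\epsilon/4}\right) \leq |W_1| \leq N(r),\qquad N(r)\leq |W_2|\leq N\!\left((1+\epsilon/4)r\right),
\]
from which the three branches of {\tt kth NND Decider} are immediate: if $|W_1|>k$ then $N(r)>k$ so $d_k<r$; if $|W_2|<k$ then $N(r)<k$ so $d_k>r$; otherwise $N\!\left(r/(1+\epsilon/4)\right)\leq k$ and $N\!\left((1+\epsilon/4)r\right)\geq k$, placing $d_k$ in the interval $[r/(1+\epsilon/4),(1+\epsilon/4)r]$, whose endpoints have ratio $(1+\epsilon/4)^2 \leq 1+\epsilon$ for $\epsilon\in(0,1/2]$. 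The cost of one decider call is dominated by two invocations of {\tt DelFar}, hence $\tilde{O}(dn^{2-\Theta(\sqrt{\epsilon})})$, and each call succeeds with probability $1-o(1/n^{0.04})$.

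Next, I would appeal to the framework of \cite{HR15} recalled in Appendix~\ref{Aframework}. That framework takes a $(1+\epsilon)$-approximate decider and a $(1+\epsilon)$-approximate $r$-net construction and produces a $(1+\epsilon)$-approximate solution to a distance optimization problem such as the $k$th nearest neighbor distance; the only change required to port it to the high-dimensional setting is that the linear-time ingredients are replaced by the subquadratic ones from Theorems~\ref{ApprxNet} and the decider above. The framework invokes the decider and the net construction only polylogarithmically many times, so the total running time remains $\tilde{O}(dn^{2-\Theta(\sqrt{\epsilon})})$. Among the $O(\mathrm{polylog}\,n)$ random primitives called, the overall success probability is $1-o(1)$ by a union bound, and the framework returns an actual pair $x,y\in X$ with $\|x-y\|\in[(1-\epsilon)d_k,(1+\epsilon)d_k]$, which is exactly the claim.

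The main obstacle I expect is not correctness of the decider, which is essentially book-keeping with $N(t)$, but rather verifying that the \cite{HR15} machinery survives the replacement of its linear-time primitives by our subquadratic ones without blowing up the number of decider invocations beyond polylogarithmic. The delicate point is that their net-based binary search over candidate radii must remain correct when every call has a small but nonzero failure probability; handling this cleanly requires driving each individual failure probability down to $1/\mathrm{poly}(n)$ (by a standard $O(\log n)$-fold repetition of {\tt ApprxNet} and {\tt DelFar} and taking the majority or intersection of answers) before taking the union bound over all framework calls.
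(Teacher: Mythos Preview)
Your proposal is correct and follows essentially the same route as the paper: verify the decider via the sandwich bounds on $|W_1|,|W_2|$ coming from Theorem~\ref{DelFar}, then plug into the Net~\&~Prune framework of Appendix~\ref{Aframework} (Theorem~\ref{ConSpread}) and finish with a geometric binary search over radii. Two small remarks: the paper's proof explicitly checks the two problem-specific prerequisites of the framework---that {\tt DelFar}'s removal of isolated points does not change the $k$th nearest-neighbor distance, and that passing to an approximate $r$-net drifts the optimum by at most $2r$---which you should state rather than fold into the generic ``obstacle'' paragraph; and the $O(\log n)$-fold repetition you propose for boosting success probability is unnecessary, since the native $1-o(1/n^{0.04})$ bound from Theorems~\ref{ApprxNet} and~\ref{DelFar} already survives a union bound over the $O(\log n)+O(1/\epsilon^2)$ calls made by the framework and the binary search.
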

To the best of our knowledge, this is the first high dimensional solution for this problem. 
Setting $k=n$ and applying Theorem \ref{KND} one can compute the {\it farthest nearest neighbor} in $\tilde{O}(dn^{2-\Theta(\sqrt{\epsilon})})$ with high probability.

Concerning future work, let us start with the problem of finding a greedy permutation. 
A permutation $\Pi = <\pi_1, \pi_2,\dots >$ of the vertices of a metric space $(X, \norm{\cdot})$ is a \textit{greedy permutation}  if each vertex $\pi_i$ is the farthest in $X$ from the set $\Pi_{i-1} = <{\pi_1,\dots, \pi_{i-1}}>$ of preceding vertices. The computation of $r$-nets is closely related to that of the greedy permutation.

The $k$-center clustering problem asks the following: given a set $X \subseteq \RR^d$ and an integer $k$, find the smallest radius $r$ such that $X$ is contained within $k$ balls of radius $r$.
By \cite{EHS15}, a simple modification of our net construction implies an algorithm for the $(1+\epsilon)$ approximate greedy permutation in time $\tilde{O}(d n^{2-\Theta(\sqrt{\epsilon})} \log \Phi)$ where $\Phi$ denotes the spread of the pointset. 
Then, approximating the greedy permutation implies a 
$(2+\epsilon)$ approximation algorithm for $k$-center clustering problem. We expect that one can avoid any  dependencies on $\Phi$.

\if 0
The Corollaries below follow from Theorem \ref{ApprxNet}, Lemma 3.5\cite{EHS15} and Lemma 2.1\cite{EHS15}.

\begin{cor} Let $X$ be a set of $n$ points in $\RR^d$, $\epsilon \in (0, 1)$ an error parameter and let $\Phi$ be the spread of the
Euclidean metric on $X$. Then, one can compute in $O(dn^{2-\Theta(\sqrt{\epsilon})}\log \Phi)$
expected time a sequence
that is a $(1 + \epsilon)$-greedy permutation for the Euclidean metric on $X$, with high probability.
\end{cor} 

\begin{cor}
Given a set $X$ of $n$ points in $\RR^d$, an integer $k$ and an error parameter $\epsilon \in (0, 1)$, one  can compute with high probability a $(2+\epsilon)$-approximation to the optimal $k$-center clustering in $O(dn^{2-\Theta(\sqrt{\epsilon})}\log \Phi)$, where $\Phi$ is the spread of the Euclidean metric on $X$.
\end{cor}\fi 

\subsection*{Acknowledgment.}
I.Z.~Emiris acknowledges partial support by the EU
H2020 research and innovation programme, under the Marie Sklodowska-Curie grant
agreement No 675789: Network ``ARCADES".

\newpage 
\bibliographystyle{alpha}
\bibliography{nets}

\newpage
\appendix

\section{Proof of Theorem \ref{ThmVecAgg}}

\begin{lem}[Anti-concentration]\label{LemAntiCon}
Let $q_1,\ldots,q_t \in \{-1,1\}$ be chosen independently and uniformly at random, and let $a_1,\ldots,a_t\in \RR$ s.t. $|a_1|=\max_i |a_i|$. Then,
$$\Pr[|\sum_{i=1}^t q_i \cdot a_i| \geq |a_1|]\geq 1/2.$$
 
\end{lem}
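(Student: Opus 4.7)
The plan is to reduce the statement to a simple two-point conditioning argument. I would condition on all of the signs $q_2,\ldots,q_t$, set $S=\sum_{i=2}^{t} q_i a_i$, and observe that the total sum $\sum_{i=1}^{t} q_i a_i$ takes exactly the two values $a_1+S$ and $-a_1+S$, each with conditional probability $1/2$, independently of $S$. The claim will then follow if I can show that at least one of these two values has magnitude at least $|a_1|$, because that already forces a conditional success probability of at least $1/2$, which integrates to the unconditional bound.

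For the pointwise claim, I would argue by contradiction: if both $|a_1+S|<|a_1|$ and $|-a_1+S|<|a_1|$ held, the triangle inequality would give
\[
2|a_1|=|(a_1+S)-(-a_1+S)|\le |a_1+S|+|{-a_1+S}|<2|a_1|,
\]
a contradiction. Hence at least one of the two signs of $q_1$ yields $|q_1 a_1+S|\ge |a_1|$, giving the desired $1/2$ lower bound after taking expectation over $(q_2,\ldots,q_t)$.

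I do not expect any real obstacle here; the only thing to be slightly careful about is that the hypothesis $|a_1|=\max_i|a_i|$ is actually unnecessary for this particular statement (the argument works for any single distinguished coordinate), but keeping the hypothesis in place costs nothing. The whole proof should fit in a few lines.
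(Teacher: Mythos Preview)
Your proposal is correct and follows essentially the same approach as the paper: condition on $q_2,\ldots,q_t$, then argue that at least one choice of $q_1$ makes the sum large. The only cosmetic difference is that the paper handles the pointwise step by a sign-matching case split ($S=0$ versus $S\neq 0$, choosing $q_1$ so that $q_1a_1$ and $S$ share a sign), whereas your triangle-inequality contradiction accomplishes the same thing in one stroke.
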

\begin{proof}
 Consider a given assignment for $q_2,\ldots,q_t$. Then if 
 $$\sum_{i=2}^t q_i \cdot a_i =0 \implies | \sum_{i=1}^t q_i \cdot a_i|= |q_1 \cdot a_1|=|a_1|.$$
 Otherwise,
 $$\Pr[|\sum_{i=1}^t q_i \cdot a_i| \geq |a_1|] \geq \Pr[sign(q_1 \cdot a_1)=sign(\sum_{i=2}^t q_i \cdot a_i =0)]=1/2.
 $$
\end{proof}

\noindent{\textbf{Proof of Theorem \ref{ThmVecAgg}.}} 
Notice that
 $$w_{j,k}^i=\sum_{x_i\in S_k} q_i \cdot \langle x_j, x_i \rangle$$
 and since $q_1,\ldots,q_{|S_k|}\in \{-1,1\}$ are independent and chosen uniformly at random, we obtain  
 $$\EE [w_{j,k}^i ]=0.
 $$
 
 If $\forall u \in S_k$, $ |\langle x_j, u \rangle|\leq \tau$, then 
 $$Var(w_{j,k}^i)=\EE [(w_{j,k}^i)^2]\leq n^{2\alpha} \tau^2$$
 By Chebyshev's inequality:
 $$\Pr[|w_{j,k}^i|\geq 3 \cdot n^{\alpha} \tau]\leq 1/9$$
 
With $m$ repetitions, the number of successes $N$, 
that is the number of indices $i$ for which 
$|w_{j,k}^i|\leq 3 \cdot n^{\alpha} \tau$, follows the binomial distribution. Hence, 
 
$$\Pr[N \leq 3m/4] \leq exp( -m/26).$$

We consider as bad event the event that for some $j,k$, more than $25\%$ of the repetitions fail, that is $|w_{j,k}^i|\geq 3 \cdot n^{\alpha} \tau$. 
By the union bound, this probability is $\leq n^{2-\alpha} \cdot exp(- m/26)$, which for $m\geq78 \log n$ implies a probability of failure $\leq 1/n^3$.  

Now consider $x_j$, and $x_l\in S_k$ s.t. $| \langle x_j,x_l\rangle| \geq 3 \cdot n^{\alpha} \tau$, then by Lemma~\ref{LemAntiCon}, 
with probability $1/2$, $|w_{j,k}^i| \geq  3 \cdot n^{\alpha} \tau$. 
We consider as bad event the event that for $j,l$, more than $75\%$ of the repetitions fail, that is $|w_{j,k}^i|\leq 3 \cdot n^{\alpha} \tau$.
Hence, 
$$\Pr[N \leq m/4] \leq exp(-m/8 ),$$
which for $m\geq78 \log n$ implies a probability of failure $\leq 1/n^3$.

The runtime of the algorithm is dominated, up to polylogarithmic factors, by the computation of matrix $Z$, taking time $O(dn)$, the computation 
of matrix $W$, taking time $n^{2-a}$, or the 
computation of the product $W^i$, taking time 
$MatrixMul(n \times d, d \times n^{1-a})$.

\hfill \qed
\section{Proof of Theorem \ref{ThmCheb}}\label{AppThmCheb}
We refer to \cite[Algorithm 3: Chebyshev Embedding]{Val15}. The proof is the same with that of \cite{Val15}, apart from indicating that the probability of success is actually $1-o(1/n)$ instead of $1-o(1)$ as stated in \cite{Val15}. While $1-o(1/n)$ probability of success is enough for our purposes, even better probability bounds can be achieved.

The fact that all inner products are concentrated within
$\pm \sqrt{m} \log n$ about their expectations follows from the fact that each row of $Y$, $Y'$ is
generated identically and independently from the other rows, and all entries of these
matrices are $\pm 1$; thus, each inner product is a sum of independent and identically
distributed random $\pm 1$ random variables, and we can apply the basic Chernoff bound
to each inner product, and then a union bound over the $O(n^2 )$ inner products. Let $X_i \in \pm 1$ i.i.d. random variables. The basic chernoff bound gives probability,
$$
\Pr[ \, |\sum_{i=1}^{m'} X_i - \EE[\sum_{i=1}^{m'} X_i ]| >\sqrt{m'}\log n ]\leq 2 \cdot exp(-\Theta(\log^2 n))=o(1/n^3).
$$

Given 
this concentration, we now analyze the expectation of the inner products. Let $u, u'$ be
columns of $X, X'$ , respectively, and $v, v'$ the corresponding columns of $Y, Y'$. Letting 
$x=\langle u,u'\rangle/m$, we argue that by \cite[Lemma 3.3]{Val15}, $\EE[ v, v' ] = m' \sum_{i=1}^q \frac{x-c_i}{2}$ (1), 
where $c_i$ is the location of the $i$th root of the $q$th Chebyshev polynomial after the roots
have been scaled to lie in the interval $[\tau^{-} , \tau^+ ]$. To see why this is the case, note that
each coordinate of $u$, $u'$ ,is generated by computing the product of $q$ random variables 
that are all $\pm 1$; namely, a given entry of $u$ is given by $\prod_{l=1}^q s_v(l)$, with the corresponding 
entry of $u'$ given by
$\prod_{l=1}^q s_{v'}(l)$. Note that for $i \neq j$, $s_v (i)$ is independent of $s_v ( j)$ and
$t_{v'} ( j)$, although by construction, $s_v (i)$ and $t_{v'} (i)$ are not independent. We now argue that 
$\EE[s_v(i)t_{v'}(i)] = \frac{x-c_i}{2}$
, from which Eq. (1) will follow by the fact that the expectation of the 
product of independent random variables is the product of their expectations.

By construction, in Step (1) of the inner loop of the algorithm, with probability $1/2$,
$E[s_v (i)t_{v'} (i)] = \langle v, v'\rangle /m = x$. Steps (2)–(4) ensure that with the remaining $1/2$  probability, $\EE[s_v (i)t_v (i)] = \frac{1-c_i}{2}
(1) - \frac{1+c_i}{2}
(-1) = -c_i .$ Hence, in aggregate over the randomness of 
Steps (1)–(4), $\EE[s_v(i)t_{v'} (i)] = x/2 - c_i/ 2 i$ , as claimed, establishing Eq.~(1).

To show that Eq.~(1) yields the statement of the proposition, we simply reexpress 
the polynomial $\prod_{i=1}^q \frac{x-c_i}{2}$
in terms of the $q$th Chebyshev polynomial $T_q$. Note that the 
$q$th Chebyshev polynomial has leading coefficient $2^{q-1}$, whereas this expression (as a
polynomial in $x$) has leading coefficient $1/2^q$,  disregarding the factor of the dimension
$m'$. If one has two monic degree $q$ polynomials, $P$ and $Q$ where the roots of $Q$ are given
by scaling the roots of $P$ by a factor of $\alpha$, then the values at corresponding locations
differ by a multiplicative factor of $1/\alpha^q$; since the roots of $T_q$ lie between $[-1, 1]$ and
the roots of the polynomial constructed in the embedding lie between $[\tau^{-} , \tau^{+} ]$, this 
corresponds to taking $\alpha = \frac{2}{\tau^{+}-\tau^{-}}$.

\section{Proof of Theorem \ref{AppIPnet}}

\begin{thm}[\cite{Cop97}]\label{ThmCop}
For any positive $\gamma > 0$, provided that $\beta < 0.29$, the product of a $k \times k^\beta$ with
a $k^\beta \times k$ matrix can be computed in time $O(k^{2+\gamma} )$.
\end{thm}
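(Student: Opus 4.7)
\medskip

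\noindent\textbf{Proof plan.} The plan is to reduce the statement to the standard fact about the rectangular matrix multiplication exponent
\[
\omega(1,\beta,1) \;=\; \inf\{\,c : \text{an } k\times k^{\beta} \text{ by } k^{\beta}\times k \text{ product is computable in } O(k^{c}) \text{ arithmetic operations}\,\},
\]
and then to establish the key inequality $\omega(1,\beta,1)=2$ for every $\beta<0.29$. The trivial lower bound $\omega(1,\beta,1)\ge 2$ comes from the $k^{2}$ output entries, so the content is the matching upper bound. Once $\omega(1,\beta,1)=2$ is known, the theorem follows immediately: for any $\gamma>0$, the definition of $\omega(1,\beta,1)$ as an infimum yields an algorithm running in $O(k^{2+\gamma})$ arithmetic operations, which is the claimed bound.

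\medskip

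\noindent\textbf{Establishing the upper bound.} I would follow the laser-method approach initiated by Strassen and refined by Coppersmith--Winograd. First I would fix a small trilinear identity (an approximate/border-rank decomposition of a particular tensor built from disjoint matrix products) that, after taking a high tensor power, exposes many ``rectangular'' sub-tensors of the shape we want. Concretely, taking the $N$-th tensor power of a base identity produces a sum of many disjoint bilinear problems, a positive fraction of which corresponds to products of rectangular matrices of aspect ratio close to the target $\beta$. The next step is to apply Sch\"onhage's asymptotic sum inequality ($\tau$-theorem), which converts this direct sum of disjoint problems into a single matrix-multiplication algorithm whose exponent is controlled by the border rank of the original identity and by the combinatorial distribution of sub-tensor shapes.

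\medskip

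\noindent\textbf{Parameter balancing.} After invoking the $\tau$-theorem one obtains an inequality of the form
\[
\sum_{i} \bigl(a_i\, b_i\, c_i\bigr)^{\omega(1,\beta,1)/3} \;\le\; r,
\]
where $r$ is the border rank of the base identity and the triples $(a_i,b_i,c_i)$ encode the shapes of sub-tensors at level $N$. Using a probabilistic/entropy argument to pick out the sub-tensors whose aspect ratio is as close as possible to $(1,\beta,1)$, together with an optimization over the free parameters in the base identity (the Coppersmith--Winograd construction has a continuous family parameterized by a value commonly written as $q$), one derives $\omega(1,\beta,1)=2$ throughout an interval $\beta\in[0,\alpha]$ with $\alpha\ge 0.294\ldots > 0.29$. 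Continuity (or monotonicity) of $\omega(1,\cdot,1)$ then gives the result on the whole range $\beta<0.29$.

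\medskip

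\noindent\textbf{Main obstacle.} The genuinely hard step is the second one: producing an explicit border-rank identity whose tensor powers contain enough rectangular sub-products of the desired shape to make the $\tau$-theorem give exponent $2$ at $\beta=0.29$. Writing out the specific identity and carrying out the attendant combinatorial/entropy computation is where all the work lives; the reduction to $\omega(1,\beta,1)$ and the deduction of the $O(k^{2+\gamma})$ bound from $\omega(1,\beta,1)=2$ are essentially formal. Since the theorem is quoted from~\cite{Cop97}, my plan is to cite that construction verbatim for the algebraic identity and provide only the reductions and the parameter-balancing argument in full detail.
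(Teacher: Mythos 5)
The paper gives no proof of this statement at all: it is imported verbatim as a black-box citation to \cite{Cop97}, exactly as you ultimately propose to do. Your sketch of the underlying argument (reduction to $\omega(1,\beta,1)=2$ for $\beta<0.29$ via the laser method, Sch\"onhage's asymptotic sum inequality, and the Coppersmith--Winograd-style construction, with the hard identity and parameter balancing deferred to the cited work) is an accurate outline of how the cited result is established, so there is no discrepancy to report.
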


\begin{cor}\label{CorMatrMul}
For any positive $\gamma > 0$, provided that $\beta < 0.29 \cdot \alpha <1 $, the product of a $n \times n^\beta$ by
a $n^\beta \times n^\alpha$ matrix can be computed in time $O(n^{1+\alpha+\alpha \gamma})$.
\end{cor}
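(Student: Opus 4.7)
The plan is to reduce rectangular multiplication to the near-square case of Theorem~\ref{ThmCop} by block-partitioning the tall left-hand matrix. Write the left factor $A \in \RR^{n \times n^\beta}$ as a stack of $n^{1-\alpha}$ row-blocks $A_1,\ldots,A_{n^{1-\alpha}}$, each of size $n^\alpha \times n^\beta$ (for notational convenience one may assume divisibility; otherwise round up, which only affects constants). Then $AB$ is obtained by vertically concatenating the products $A_i B$, where each $A_i B$ is a product of an $n^\alpha \times n^\beta$ matrix with an $n^\beta \times n^\alpha$ matrix.

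Next I would apply Theorem~\ref{ThmCop} with $k := n^\alpha$. The two inner dimensions become $n^\beta = k^{\beta/\alpha}$, so the assumption $\beta < 0.29\,\alpha$ is exactly the hypothesis $\beta/\alpha < 0.29$ needed to invoke Coppersmith's bound. Thus, for any chosen $\gamma>0$, each block product $A_i B$ can be computed in time
\[
O\bigl(k^{2+\gamma}\bigr) \;=\; O\bigl(n^{\alpha(2+\gamma)}\bigr) \;=\; O\bigl(n^{2\alpha+\alpha\gamma}\bigr).
\]

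Summing over the $n^{1-\alpha}$ blocks gives total running time
\[
O\bigl(n^{1-\alpha}\cdot n^{2\alpha+\alpha\gamma}\bigr) \;=\; O\bigl(n^{1+\alpha+\alpha\gamma}\bigr),
\]
which is the claimed bound. The condition $\alpha<1$ in the hypothesis ensures that we actually have at least one block (i.e.\ $n^{1-\alpha}\ge 1$) and that the tall factor is genuinely taller than each block, so the reduction is well-posed.

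There is essentially no main obstacle here; the argument is a one-line blocking reduction. The only subtlety worth flagging is the reindexing of exponents: one must verify that the Coppersmith hypothesis $\beta < 0.29$ is stated with respect to the block side length $k=n^\alpha$ rather than the original dimension $n$, which is precisely why the hypothesis in the corollary reads $\beta < 0.29\,\alpha$. Integrality of $n^\alpha$ and $n^\beta$ can be handled by taking ceilings, absorbing the $O(1)$ overhead into the constant hidden by $O(\cdot)$.
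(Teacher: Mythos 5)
Your proposal is correct and follows exactly the paper's argument: block the tall factor into $n^{1-\alpha}$ pieces of size $n^\alpha \times n^\beta$, apply Theorem~\ref{ThmCop} with $k=n^\alpha$ (the hypothesis $\beta<0.29\,\alpha$ being precisely the rescaled Coppersmith condition), and sum to get $O(n^{1-\alpha}\cdot n^{\alpha(2+\gamma)})=O(n^{1+\alpha+\alpha\gamma})$. Your write-up just spells out the reindexing of exponents that the paper leaves implicit.
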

\begin{proof}
The idea is 
to perform $n^{1-\alpha}$ multiplications of 
matrices of size $n^{\alpha} \times n^{\beta}$ 
and $n^{\beta} \times n^{\alpha}$.

Hence, by Theorem~\ref{ThmCop}, the total cost is:
$$
O(n^{1-\alpha}(n^{\alpha(2+\gamma)}))=O(n^{1+\alpha+\alpha \gamma}).
$$
\end{proof}
\begin{fct}\label{FactCheb}
Let $T_q(x)$ denote the $q$th Chebyshev polynomial 
of the first kind, then the following hold:
\begin{itemize}
\item For $x\in [-1,1]$, $|T_q(x)|\leq 1$.
\item For $\delta \in (0,1/2]$, $T_q(1+\delta)\geq \frac{1}{2} e^{q \sqrt{\delta}}$.
\end{itemize}
\end{fct}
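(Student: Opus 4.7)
The plan is to prove both bounds using the standard trigonometric/hyperbolic representations of the Chebyshev polynomials of the first kind. Recall that $T_q$ is characterized by the identities $T_q(\cos\theta) = \cos(q\theta)$ for real $\theta$ and, by analytic continuation, $T_q(\cosh t) = \cosh(qt)$ for real $t\ge 0$. I would state these as the starting point and derive each bound from the appropriate one.

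For the first bullet, for any $x \in [-1,1]$ there exists $\theta \in [0,\pi]$ with $x = \cos\theta$. Then $T_q(x) = \cos(q\theta) \in [-1,1]$, so $|T_q(x)| \le 1$. This is essentially immediate and I would dispatch it in one sentence.

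For the second bullet, write $1+\delta = \cosh t$ with $t = \cosh^{-1}(1+\delta) \ge 0$, so that $T_q(1+\delta) = \cosh(qt) \ge \tfrac12 e^{qt}$ using $\cosh y \ge \tfrac12 e^{y}$ for $y \ge 0$. It then suffices to show $t \ge \sqrt{\delta}$ for $\delta \in (0,1/2]$, equivalently $\cosh(\sqrt{\delta}) \le 1+\delta$. Expanding the Taylor series,
\[
\cosh(\sqrt{\delta}) = 1 + \frac{\delta}{2} + \sum_{k\ge 2} \frac{\delta^{k}}{(2k)!},
\]
so the desired inequality reduces to $\sum_{k\ge 2} \delta^{k-1}/(2k)! \le 1/2$, which is comfortably satisfied for $\delta \le 1/2$ (the sum is bounded by $\delta/24 \cdot \frac{1}{1-\delta}$). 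Hence $t \ge \sqrt{\delta}$, giving $T_q(1+\delta) \ge \tfrac12 e^{q\sqrt{\delta}}$.

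The main obstacle, such as it is, is merely verifying the elementary inequality $\cosh(\sqrt{\delta}) \le 1+\delta$ cleanly on the stated range $\delta \in (0,1/2]$; everything else is a direct appeal to standard identities. I expect the entire proof to be short, with the only real content being the sharpness-preserving estimate $\cosh^{-1}(1+\delta) \ge \sqrt{\delta}$, which captures the ``$\sqrt{\delta}$ amplification'' phenomenon that motivated Chebyshev embeddings in the first place.
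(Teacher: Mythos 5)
Your proof is correct. The paper states this as a bare Fact with no proof of its own (it is invoked as a standard property of Chebyshev polynomials), so there is nothing to compare against; your derivation via the representations $T_q(\cos\theta)=\cos(q\theta)$ and $T_q(\cosh t)=\cosh(qt)$ is the standard route and is complete. The first bullet is immediate, and for the second the key step $\cosh^{-1}(1+\delta)\geq\sqrt{\delta}$, i.e.\ $\cosh(\sqrt{\delta})\leq 1+\delta$, is verified correctly: the reduction to $\sum_{k\geq 2}\delta^{k-1}/(2k)!\leq 1/2$ and the bound $\frac{\delta}{24(1-\delta)}\leq \frac{1}{24}$ for $\delta\in(0,1/2]$ are both right, and combined with $\cosh(y)\geq\frac{1}{2}e^{y}$ for $y\geq 0$ and monotonicity of $\cosh$ this gives exactly $T_q(1+\delta)\geq\frac{1}{2}e^{q\sqrt{\delta}}$.
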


\begin{clm}\label{ClmArccos}
For $\rho\in[-1,1]$, $\epsilon\in (0,1)$, it holds
 ${ \cos^{-1}(\rho-\epsilon)}-{ \cos^{-1}(\rho)}\geq \epsilon/2$.
\end{clm}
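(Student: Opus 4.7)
The plan is to exploit the simple bound $\lvert (\arccos)'(x)\rvert = 1/\sqrt{1-x^2} \ge 1$ for all $x \in (-1,1)$, which comes from $\sqrt{1-x^2}\le 1$. The claim then follows (in fact with a better constant) from either the mean value theorem or direct integration. Since the statement implicitly requires $\arccos(\rho-\epsilon)$ to be defined, I assume $\rho-\epsilon\ge -1$; the case $\rho=1$ is handled by continuity of $\arccos$ at $1$.

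Concretely, first I would write
\[
\cos^{-1}(\rho-\epsilon)-\cos^{-1}(\rho) \;=\; \int_{\rho-\epsilon}^{\rho} \frac{dx}{\sqrt{1-x^2}},
\]
which is valid because $\arccos$ is absolutely continuous on $[-1,1]$ with derivative $-1/\sqrt{1-x^2}$ on the open interval. Next, on the interval of integration we have $x\in[-1,1]$, so $1-x^2\le 1$ and therefore $1/\sqrt{1-x^2}\ge 1$. Integrating this pointwise bound over an interval of length $\epsilon$ yields
\[
\cos^{-1}(\rho-\epsilon)-\cos^{-1}(\rho)\;\ge\;\int_{\rho-\epsilon}^{\rho} 1\,dx \;=\; \epsilon \;\ge\; \epsilon/2,
\]
which is the desired inequality.

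There is essentially no obstacle here: the argument is a one-line application of monotonicity of the integrand together with the lower bound $1/\sqrt{1-x^2}\ge 1$. The only thing to be careful about is the endpoint $\rho=1$ (respectively $\rho-\epsilon=-1$), where the derivative of $\arccos$ blows up; but since $\arccos$ is continuous at $\pm 1$ and the integral $\int_{1-\epsilon}^{1} dx/\sqrt{1-x^2}$ converges, the argument goes through without modification. As a sanity check, the same bound can be obtained from the mean value theorem applied to $\arccos$ on $[\rho-\epsilon,\rho]$, which produces $\epsilon/\sqrt{1-\xi^2}\ge\epsilon$ for some $\xi$ in that interval.
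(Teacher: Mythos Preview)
Your proof is correct and follows essentially the same approach as the paper: both express the difference as $\int_{\rho-\epsilon}^{\rho}\frac{dx}{\sqrt{1-x^2}}$ and then bound the integrand from below. Your choice of the uniform bound $1/\sqrt{1-x^2}\ge 1$ is in fact cleaner than the paper's, which bounds the integrand at an endpoint and therefore needs a separate case for $\rho-\epsilon=-1$; your version yields $\epsilon$ (hence $\epsilon/2$) directly with no case analysis.
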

\begin{proof}
If $(\rho -\epsilon)^2 \neq 1$ then we have
$${ \cos^{-1}(\rho-\epsilon)}-{ \cos^{-1}(\rho)}=
\int_{\rho -\epsilon}^1\frac{1}{\sqrt{1-x^2}} \mathrm{d}x - \int_{\rho }^1\frac{1}{\sqrt{1-x^2}} \mathrm{d}x =
$$
$$
= \int_{\rho-\epsilon }^{\rho}\frac{1}{\sqrt{1-x^2}} \mathrm{d}x=\int_{0}^\epsilon \frac{1}{\sqrt{1-(\rho-\epsilon+y)^2}} \mathrm{d}y  
\geq \int_{0}^\epsilon \frac{1}{\sqrt{1-(\rho-\epsilon)^2}} \mathrm{d}y=\frac{\epsilon}{\sqrt{1-(\rho-\epsilon)^2}}\geq \epsilon.
$$

Now if $(\rho-\epsilon)^2\neq 1 \implies \rho-\epsilon=-1$ then, 
$$
{ \cos^{-1}(\rho-\epsilon)}-{ \cos^{-1}(\rho)}=
\int_{-1 }^{-1+\epsilon}\frac{1}{\sqrt{1-x^2}} \mathrm{d}x  \geq \frac{\epsilon}{\sqrt{2\epsilon-\epsilon^2}}\geq \epsilon/2.
$$
\end{proof}

\noindent{\textbf{Proof of Theorem  \ref{AppIPnet}.}} 
If $\rho \leq \epsilon$, our approach ensures that 
for any $x,y\in C$, it holds $\langle x, y \rangle< \rho-\epsilon \leq 0$. We show that $|C|\leq d+1$, 
due to a simple packing argument. Let 
$x_1,\ldots,x_{d+2}$ such that $\forall i\neq j \in [d+2]$
we have $\langle x_i, x_j \rangle <0$. Then, there exist $\lambda_1,\ldots,\lambda_{d+1} \in \RR$ not all zero for which 
$\sum_{i=1}^{d+1} \lambda_i x_i =0$. 
Now consider two subsets $I,J \subseteq [$d$+2]$ of 
indices such that $\forall i \in I, \lambda_i > 0$ 
and $\forall j \in J, \lambda_j < 0$. We can write $\sum_{i\in I} \lambda_i x_i=\sum_{j\in J} -\lambda_j x_j \implies 
0 \leq \langle \sum_{i\in I} \lambda_i x_i , -\sum_{j\in J} \lambda_j x_j \rangle = -\sum_{i\in I,j \in J} \lambda_i \lambda_j \langle x_i, x_j\rangle <0 $
which leads to contradiction. If $J= \emptyset$ (or equivalently if $I= \emptyset$), then
$0=\langle x_{d+2}, \sum_{i\in I} \lambda_i x_i \rangle <0$, which leads again to contradiction. 

We now focus on the case $\rho>\epsilon$.
By Theorem \ref{ThmUnif}, with probability $1 - o(1/n^2)$, the matrix $Y$ returned by the corresponding 
algorithm will have the property that any pair of columns 

$$\langle X_i ,X_j \rangle\geq \rho 
\implies \frac{\langle Y_i, Y_j \rangle}{d'} \geq 
1-\frac{2 \cos^{-1}(\rho)}{\pi} -\delta$$

$$\langle X_i ,X_j \rangle\leq \rho-\epsilon 
\implies \frac{\langle Y_i, Y_j \rangle}{d'} \leq 
1-\frac{2 \cos^{-1}(\rho-\epsilon)}{\pi} +\delta.$$
Hence, according to 
Claim \ref{ClmArccos}, it suffices to set $\delta=\epsilon/3 \pi$ in order to distinguish between the two cases:
$$
1-\frac{2 \cos^{-1}(\rho-\epsilon)}{\pi} +2\delta \leq 
1-\frac{2 \cos^{-1}(\rho)}{\pi} -\delta.
$$

Now we set $\tau^{+}=1-\frac{2 \cos^{-1}(\rho-\epsilon)}{\pi} +\delta>-1$. 
By Theorem \ref{ThmCheb}, with probability $1-o(1)$, 
$$\langle Y_i,Y_j\rangle\leq \tau^{+} d' \leq \implies|\langle Z_i, Z_j \rangle| \leq d'' \frac{2^{q}}{2^{3q-1}}+\sqrt{d''} \log n \leq 3n^{0.16}
$$
for large enough $n$.
Moreover, let $Y_i,Y_j$ s.t. 
$\langle Y_i, Y_j \rangle \geq  (\tau^{+}+\delta) d'$. Then, 
$$|\langle Z_i, Z_j' \rangle| \geq 
 d'' \cdot T_q \Big(1+2 \frac{\delta}{\tau^{+}+1}\Big) \frac{2^{q}}{2^{3q-1}}-\sqrt{d''} \log n  > \frac{1}{2}\cdot T_q \Big(1+2 \frac{\delta}{\tau^{+}+1}\Big) \cdot n^{0.16}
$$
for large enough $n$.

Then, by Fact \ref{FactCheb}, 
$$|\langle Z_i, Z_j' \rangle| \cdot n^{-0.16}\geq \frac{1}{4} e^{q\sqrt{\delta}} 
= \frac{1}{4} n^{\sqrt{\delta}/50} 
\geq 3n^{\sqrt{\delta}/100} 
\geq 3n^{\sqrt{\epsilon}/400},$$
where some of the inequalities hold for large enough $n$. 

Now, by Theorems \ref{ThmUnif}, \ref{ThmCheb}, \ref{ThmCrudeNet} and Corollary \ref{CorMatrMul} 
the time complexity is $\tilde{O}(dn+n^{2-\sqrt{\epsilon}/600})$, if we set as $\gamma$ in Corollary \ref{CorMatrMul} a sufficiently small multiple of $\sqrt{\epsilon}$. Finally,, the subroutine with the higher probability of failure is {\tt Crude ApprxNet} and by the union bound, it dominates the total probability of failure.
\hfill \qed

\if0
Find $\delta=\delta(\epsilon)$ in order to satisfy
\footnote{we aim for $\delta=\Theta(\epsilon)$}:
$$1-\frac{2 \cos^{-1}(\rho)}{\pi} -\delta
\geq 
1-\frac{2 \cos^{-1}(\rho-\epsilon)}{\pi} +\delta
\iff { \cos^{-1}(\rho-\epsilon)}-{ \cos^{-1}(\rho)} \geq \pi \cdot \delta
$$ \fi

\section{Proof of Corollary \ref{Standardize}}
We use an algorithm introduced in \cite{Val15}: its guarantees are stated below. 
\begin{thm}\label{Stand} \cite{Val15}
There exists an algorithm which on input a $d \times n$ matrix $X$ with entries $x_{i,j} \in \RR$ and a constant $\epsilon \in (0, 1)$
outputs a  $m'\times n $ matrix $Y$ with columns having unit norm and $m'=\log^3n$, such that, with probability $1-o(1/poly(n))$ for all sets of four columns $Y_1, Y_2, Y_3, Y_4$ of matrix $Y$, with $X_1, X_2, X_3, X_4$ being the corresponding columns of matrix $X$, it holds that \[ 
\frac{\|Y_1-Y_2\|}{\|Y_3-Y_4\|} \frac{\|X_3-X_4\|}{\|X_1-X_2\|} \in [1-\frac{\epsilon}{10}, 1+\frac{\epsilon}{10} ] .
\]
\end{thm}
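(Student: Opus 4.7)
The plan is to follow the three-stage construction of~\cite{Val15}: first project $X$ via Johnson--Lindenstrauss to dimension $m' = \log^3 n$, then shift every projected column by a single random vector $u$ of magnitude $8c/\epsilon$ where $c$ is the largest column norm after projection, and finally renormalize each shifted column to the unit sphere.

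The first stage is standard. With $m' = \log^3 n$ coordinates, JL gives per-pair multiplicative distortion $O(1/\log n)$ with failure probability $o(1/n^k)$ for any constant $k$, so a union bound over all $\binom{n}{2}$ pairs preserves every pairwise distance within a factor $(1 \pm O(1/\log n))$ with probability $1 - o(1/\mathrm{poly}(n))$. It therefore suffices to analyse the shift-and-normalize map $X'' \mapsto Y$, where $v_i := X''_i + u$ and $Y_i := v_i/\|v_i\|$.

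The crux is a polarization identity. Since $v_i - v_j = X''_i - X''_j$, expanding $\|Y_i - Y_j\|^2 = 2 - 2\langle v_i, v_j\rangle/(\|v_i\|\|v_j\|)$ yields
\[
\|Y_i - Y_j\|^2 \;=\; \frac{\|X''_i - X''_j\|^2 \;-\; (\|v_i\| - \|v_j\|)^2}{\|v_i\|\,\|v_j\|}.
\]
I would then draw $u$ uniformly on the sphere of radius $8c/\epsilon$, so that by standard spherical concentration $|\langle u, w\rangle| = O(\|u\|\|w\|\sqrt{\log n / m'})$ simultaneously for every $w$ in a list of $\mathrm{poly}(n)$ vectors, with probability $1 - o(1/\mathrm{poly}(n))$. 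Applying this to the single vectors $w = X''_i$ and the pairwise vectors $w = X''_i - X''_j$, together with the elementary identity $\|v_i\|^2 - \|v_j\|^2 = (\|X''_i\|^2 - \|X''_j\|^2) + 2\langle u, X''_i - X''_j\rangle$ and the inequality $|\|X''_i\|^2-\|X''_j\|^2| \leq 2c\|X''_i-X''_j\|$, produces two estimates valid simultaneously for all $i,j$: (a) $\|v_i\| = \|u\|\bigl(1 + O(\epsilon/\log n)\bigr)$, and (b) $(\|v_i\| - \|v_j\|)^2 = O(\epsilon^2)\cdot\|X''_i - X''_j\|^2$.

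Substituting (a) and (b) into the polarization identity gives $\|Y_i - Y_j\| = (\|X''_i - X''_j\|/\|u\|)\cdot(1 \pm O(\epsilon))$ for every pair simultaneously. The factor $\|u\|$ then cancels in the ratio $\|Y_1-Y_2\|/\|Y_3-Y_4\|$, the two $(1 \pm O(\epsilon))$ distortions compound to $(1 \pm O(\epsilon))$, and choosing the shift constant ($8/\epsilon$ is one convenient choice) so that the hidden constant is small enough places the ratio inside $[1-\epsilon/10,\,1+\epsilon/10]$. The main obstacle is step (b): bounding $(\|v_i\|-\|v_j\|)^2$ by $O(\epsilon^2)\|X''_i-X''_j\|^2$ rather than the trivial $O(\epsilon^2\|u\|^2)$ is precisely what keeps the subtracted term in the numerator harmless when $\|X''_i-X''_j\|$ is tiny, and it forces us to use JL-style concentration of $\langle u, X''_i-X''_j\rangle$ together with the norm-difference estimate, rather than treating the two summands in the polarization separately.
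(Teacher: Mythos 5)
The paper contains no internal proof of this statement: it is imported verbatim from Valiant \cite{Val15} (``We use an algorithm introduced in \cite{Val15}: its guarantees are stated below''), so the only comparison available is with the cited construction itself, and your reconstruction --- Johnson--Lindenstrauss to $m'=\log^3 n$ dimensions, a common shift by a random vector $u$ of norm $8c/\epsilon$ with $c$ the largest projected column norm, then renormalization --- is exactly that construction, and your polarization identity plus estimate (b) is a sound way to prove its guarantee. One correction: your estimate (a) cannot be $\|v_i\|=\|u\|\bigl(1+O(\epsilon/\log n)\bigr)$, because the deterministic term $\|X''_i\|^2/\|u\|^2$ may be as large as $\epsilon^2/64$ and has nothing to do with $\log n$; the correct statement is $\|v_i\|=\|u\|\bigl(1+O(\epsilon^2)\bigr)$, which is still all the downstream cancellation needs, so your conclusion $\|Y_i-Y_j\|=(\|X''_i-X''_j\|/\|u\|)\bigl(1\pm O(\epsilon^2)\bigr)$ and hence the theorem survive unchanged. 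A minor quantitative remark: to honestly get the stated $1-o(1/\mathrm{poly}(n))$ success probability, bound $|\langle u,w\rangle|$ at level $\Theta(\epsilon)\|u\|\|w\|$ (failure $e^{-\Theta(\epsilon^2\log^3 n)}$, superpolynomially small) rather than at scale $\sqrt{\log n/m'}$, whose failure probability is only inverse-polynomial with an exponent tied to the chosen constant; the resulting $O(\epsilon)$ bounds still feed into (a) and (b) with room to spare.
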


Now, let us define two $d$-dimensional vectors $X_{n+1}, X_{n+2}$, s.t.\ $r'=X_{n+1}-X_{n+2}$ and $\|r'\|=r$, where $X$ is a $d \times n$ matrix  with entries $x_{i,j} \in \RR$ and $r \in \RR$ is the radius of the $r$-net of $X$. Also, let matrix $X'$ denote the concatenation of $X$, $X_{n+1}$ and $X_{n+2}$ with size $d\times (n+2)$. After applying Theorem \ref{Stand} on input $X'$ and $\epsilon/10$, we define $\rho:=\|Y_{n+1}-Y_{n+2}\|$ to be the new radius of $Y$. Then,  we claim that the following hold with probability $1-o(1/poly(n))$, which immediately implies Corollary \ref{Standardize}:
\begin{itemize}
\item For all $X_i, X_j \in X$ and their corresponding $Y_i, Y_j \in Y$, if $\|X_i-X_j\| \leq r$ then $\|Y_i-Y_j\| \leq (1+\epsilon/10) \rho$.
\item For all $X_i, X_j \in X$ and their corresponding $Y_i, Y_j \in Y$, if $\|X_i-X_j\| \geq (1+\epsilon)r$ then $\|Y_i-Y_j\| \geq (1+\epsilon/2) \rho$.
\end{itemize}

\noindent{\textbf{Proof of Corollary \ref{Standardize}}}.
From Theorem \ref{Stand}, we easily derive that for all $X_i, X_j \in X$ and their corresponding $Y_i, Y_j \in Y$, it holds that \[\|Y_i-Y_j\| \leq  (1+\epsilon/10) \frac{\|X_i-X_j\|}{r}  \rho\]
Therefore, if $\|X_i-X_j\| \leq r$, we have $\|Y_i-Y_j\| \leq  (1+\epsilon/10) \rho$. For the other direction, we use the opposite side of Theorem \ref{Stand}, thus we have that for all $X_i, X_j \in X$ and their corresponding $Y_i, Y_j \in Y$:
\[
\|Y_i-Y_j\| \geq  (1-\epsilon/10) \frac{\|X_i-X_j\|}{r}  \rho .
\] 
It follows that $\|X_i-X_j\| \geq (1+\epsilon)r \Rightarrow \|Y_i-Y_j\| \geq  (1-\epsilon/10) (1+\epsilon)  \rho \Rightarrow \|Y_i-Y_j\| \geq  (1+\epsilon/2) \rho$.
\hfill \qed

\section{Proof of Theorem \ref{ThmLargeRadius}}

\begin{framed}
{\tt ApprxNet(Large radius)}\\

Input: $X =[x_1,\ldots,x_n]^T $ with each $x_i \in \SS^{d-1}$ with $d=\log^3n$, $r> 1/n^{0.9}$, $\epsilon \in (0,1/2]$.

Output: Sets $R, F\subseteq [n]$.
\begin{itemize}
\item If $r>0.2$ run  {\tt Inner Product ApprxNet} with error parameter $\epsilon/25$ and $\rho=1-\frac{r^2}{2}$.
\item Otherwise, define the $d \times n$ matrix $Z$ as follows:
for each $i \in [d]$, select $q=\floor*{\frac{\pi}{2 \cos^{-1}(1-r^2/2)}}$ uniformly random vectors
$v_1, \ldots, v_q$ and for all $j \in [n]$, set
\begin{center} 
$z_{i,j}=sign
\prod\limits_{k=1}^{k=q}X_j^Tv_k$,
\end{center}
where $X_j$ is the $j$th column of matrix $X$.
\item Run {\tt Inner Product ApprxNet} with $\rho= \Big(1-\frac{2\cos^{-1}(1-r^2/2)}{\pi}\Big)^q$, error parameter $\epsilon/100$ and input matrix $Z$ with all entries 
scaled by $1/\sqrt[]{d}$ to make them have unit norm.
\end{itemize}
\end{framed}

\noindent{\textbf{Proof of Theorem \ref{ThmLargeRadius}.}}
\if 0The construction of the $(1+\epsilon)$r-net requires:
\begin{enumerate}
\item For every point $p \in \SS^{d-1}$, there is a center $c \in R$ such that $\|p-c\| \leq (1+\epsilon)r$. 
\item For every two centers $c_1, c_2 \in R$: $\|c_1-c_2\| \geq r$.
\end{enumerate}
So we want {\tt Inner Product ApprxNet} to distinguish between the cases:
\begin{enumerate}
\item $\|p-q\| \leq r$ 
\item $\|p-q\| \geq (1+\epsilon)r$
\end{enumerate}
\fi

In the case of $r>0.2$ we will show that the $1+\epsilon$ multiplicative approximation on the distance translates to $c \epsilon$ additive approximation to the inner product. Applying the law of cosines, the first condition yields $\langle p,q \rangle \geq 1-\frac{r^2}{2}$ and  the second condition yields
$\langle p,q \rangle \leq 1- \frac{r^2}{2}-\frac{2\epsilon r^2+ (\epsilon r)^2}{2} < 1- \frac{r^2}{2}-\frac{\epsilon}{25}$. So, it suffices 
to take $c=1/25$.

Now suppose that $r<0.2$. For each random vector $v$ we have 
that $\EE[sign(X_i^Tv \cdot X_j^T v)]= 1-\frac{2\theta(X_i,X_j)}{\pi}$, where $\theta(X_i,X_j)$ denotes the angle between $X_i,X_j$. Since expectations of independent random variables multiply, we have that, for each $k$,
\begin{center}
$\EE[z_{k,i}z_{k,j}]=(1-2 \cdot \theta(X_i,X_j)/\pi)^q$.
\end{center}
Now let $\theta_r=\cos^{-1}(1-r^2/2)$, 
$$\|X_i-X_j\| \leq r \implies \theta(X_i,X_j)\leq \theta_r \implies \EE[\langle Z_i,Z_j \rangle] \geq d (1-2\theta_r/\pi)^q
$$
$$
\|X_i-X_j\| \geq (1+\epsilon)r \implies \theta(X_i,X_j)\geq (1+\epsilon/2) \theta_r \implies \EE[\langle Z_i,Z_j \rangle] \leq d (1-2 (1+\epsilon/2)\theta_r/\pi)^q.
$$

{Notice that}, 
$$
\frac{ (1-2 (1+\epsilon/2)\theta_r/\pi)^q}{ (1-2\theta_r/\pi)^q}<1-\epsilon/10,
$$
for $q=\floor*{{\pi}/{(2\theta_r)}}$ and since $n^{-0.9}\leq r\leq 0.2$. 
Notice that $d(1-2\theta_r/\pi)^{q}\in [0.3d,0.5d]$. Hence, if $\|X_i-X_j\|\leq r$ and 
$\|X_l-X_k\|\geq (1+\epsilon )r$, 
$$
\EE[\langle Z_l,Z_k \rangle]<(1-\epsilon/10)
{ \EE[\langle Z_i, Z_j \rangle]} \leq  \EE[\langle Z_i, Z_j \rangle]-0.3 d \epsilon/10,
$$

By a union bound over Chernoff bounds, since $d=\log^3n$, with probability $1-o(1/poly(n))$, the inner products between any two columns of $Z$ differs from their expectations by $o(d)$. After performing the scaling procedure, and due to the fact that $d(1-2\theta_r/\pi)^{q}\leq 0.5d$, we conclude that it suffices to compute 
{\tt Inner Product ApprxNet} with $\rho= (1-2 \cdot \theta_r/\pi)^q$ and approximation error $\epsilon/100$. 

\if0we have to distinguish between the cases:
\begin{enumerate}
\item $\rho \geq (1-2\theta_r /\pi)^q-o(1) \geq (1-2\theta_r/\pi)^q-\epsilon/100$ 
\item $ \rho' \leq (1-\epsilon/10)(1-2\theta_r/\pi)^q+o(1)
\leq (1-2\theta_r/\pi)^q-0.3\epsilon/10
+\epsilon/100=((1-2\theta_r/\pi)^q+\epsilon/100)-\epsilon/100$
\end{enumerate} \fi
The runtime of all components of the algorithm aside from the calls to {\tt Inner Product ApprxNet} is bounded by $\tilde{O}(n/\cos^{-1}(1-r^2/2))=\tilde{O}(n^{1.9})$.
\hfill \qed

\section{Proof of Theorem \ref{DelFar}}

We present a randomized approximation algorithm which, given a pointset in $\RR^d$ and distance parameter $r$, returns the points that have at least one neighbor at distance at most $r$.

\begin{framed}
{\tt DelFar}\\

Input: Matrix $X=[x_1,\ldots,x_n]$ with each $x_{i} \in \RR^d$, parameter $r \in \RR$, constant $\epsilon \in (0, 1/2]$.

Output: $F' \subseteq \{x_1,\ldots,x_n\}$.
\begin{itemize}
\item  Let $Y$, $r'$ be the output by algorithm {\tt Standardize} on input $X$, $r$ with parameter $\epsilon/4$.
\item If $r \geq 1/n^{0.9}$ run {\tt ApprxNet(Large radius)} on input $Y$,  $\epsilon/4, r$ and return points which correspond to the set $F' \gets X\backslash F$.
\item If $r < 1/n^{0.9}$ run {\tt ApprxNet(Small radius)} on input $Y$,  $\epsilon/4, r$ and return points which correspond to the set $F' \gets X\backslash F$.
\end{itemize}
\end{framed}

By Theorems \ref{ThmLargeRadius}, \ref{ThmSmallr}, \ref{Standardize},  both {\tt ApprxNet(Large radius)} and {\tt ApprxNet(Small radius)} return a set $F$, the subset of the centers of $r$-net that are isolated, i.e. the points that do not have any neighbor at distance $(1+\epsilon)r$. Also, both procedures run in $\tilde{O}(dn^{2-\Theta(\sqrt{\epsilon})})$. Thus, {\tt DelFar} on input a $d\times n$ matrix $X$, a radius $r \in \RR$ and a fixed constant $\epsilon \in (0,1/2]$ returns a set $F'\subseteq \{x_1,\ldots,x_n\}$, which contains all the points (vectors) of $X$ that have at least one neighbor at distance $r$. Additionally, the algorithm costs $\tilde{O}(dn^{2-\Theta(\sqrt{\epsilon})})$ time and succeeds with  probability $1-o(n^{0.04})$.

\section{A general framework for high dimensional distance problems}\label{Aframework}

In this section, we modify a framework  originally introduced by \cite{HR15}, which provides an efficient way for constructing approximation  algorithms for a variety of well known distance problems. 
\if 0
This method exploits the computation of $r$-nets in linear time , but suffers from the curse of dimensionality. We present a modification of this framework for randomized approximate $r$-nets and an example of its application, a fast randomized approximate algorithm for a well studied problem, the $k$th nearest neighbor distance. To the best of our knowledge, this is, so far, \hl{the best algorithm for high dimensional data sets.

The cornerstone of this framework is to find an algorithm that can bound the optimal solution of a problem in an appropriate interval in order to perform in it binary search for the desired approximation of the optimal solution.\fi We present the algorithm Net and Prune of \cite{HR15}, modified to call the algorithms {\tt ApprxNet} and {\tt DelFar}. We claim that this algorithm computes, with high probability, a constant spread interval and  costs ${O}(dn^{1.999999})$ time.

 {We assume the existence of a fast approximate decider procedure for the problems we want to address using this framework, specifically an algorithm that runs in $\tilde{O}(dn^{2-\Theta(\sqrt{\epsilon})})$, where $\epsilon$ is the approximation factor. Formally,}
\begin{dfn}
Given a function $f : X \rightarrow \RR$, we call a decider procedure a $(1+\epsilon)$-decider for $f$, if
for any $x\in X$ and $r > 0$, decider$(r, x)$ returns one of the following: (i) $f(x) \in [\alpha, (1+\epsilon) \alpha]$, where $\alpha$ is some
real number, (ii) $f(x) < r$, or (iii) $f(x) > r$.
\end{dfn}
Additionally, we assume the problems we seek to improve with this method have the following property: if the decider  returns that the optimal solution is smaller than a fixed value $r$, we can efficiently remove all points that do not have any neighbor at distance at most $r$ and this does not affect the optimal solution. Let us denote $f(X)$ the optimal solution of a problem for input $X$.

\begin{framed}
{\tt Net \& Prune}\\

Input: An instance $(X, \Gamma)$ s.t. $X \subseteq \RR^d$.

Output: An interval $[x, y]$ containing the optimal value.
\begin{itemize}
\item $X_0=X, i=0$
\item While TRUE do
\begin{itemize}
\item Choose at random a point $x \in X_i$ and compute its nearest neighbor distance, $l_i$
\item Call $\frac{3}{2}$-decider$(2l_i/3, X_i)$ and $\frac{3}{2}$-decider$(cl_i, X_i)$. Do one of the following:
\begin{itemize}
\item If $\frac{3}{2}$-decider$(2l_i/3, X_i)$ returns $f(X_i) \in [x,y]$, return $f(X) \in [x/2,2y]$
\item If $\frac{3}{2}$-decider$(cl_i, X_i)$ returns $f(X_i) \in [x',y']$, return $f(X) \in [x'/2,2y']$
\item If $2l_i/3$ is too small and $cl_i$ too large, return $[l_i/3,2cl_i]$
\item If $2l_i/3$ is too large, call $X_{i+1}=${\tt DelFar}$(2l_i/3, X_i, \frac{3}{2})$ 
\item If $cl_i$ is too small,  $X_{i+1}=${\tt ApprxNet}$(4l_i, X_i, \frac{3}{2})$ 
\end{itemize}
\item $i=i+1$, 
\end{itemize}
\end{itemize}
\end{framed}

Let us denote as $|{X_i}^{\leq l}|$ and $|{X_i}^{\geq l}|$ the set of points in $X$, whose nearest neighbor distance is smaller than $l$ and greater than $l$, respectively.

\begin{thm}
Assume that the {\tt DelFar} algorithm and the {\tt ApprxNet} algorithm succeed with probability $1-\frac{1}{n^{0.01}}$. The algorithm Net \& Prune $(X, \Gamma)$ runs in expected ${O}(dn^{1.999999})$ time.
\end{thm}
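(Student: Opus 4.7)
The plan is to bound the expected number of iterations by $O(\log n)$ and show that each iteration costs at most $\tilde{O}(d|X_i|^{2 - c_0})$ for a fixed $c_0 > 0$ (the $\Theta(\sqrt{\epsilon})$ exponent gain of Theorems~\ref{ApprxNet} and~\ref{DelFar} with $\epsilon = 1/2$); summing a geometric series then yields the claim. The key probabilistic observation is that for a uniformly random $x \in X_i$, its nearest-neighbor distance $l_i$ in $X_i$ lies between the first and third quartile of the sorted nearest-neighbor-distance list of $X_i$ with probability $\geq 1/2$. On this ``quartile event'', $|X_i^{\leq l_i}| \geq |X_i|/4$ and $|X_i^{> l_i}| \geq |X_i|/4$, writing $X_i^{\leq l}$ (resp.\ $X_i^{> l}$) for the points of $X_i$ whose in-set nearest-neighbor distance is at most (resp.\ strictly greater than) $l$.

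I would then analyse the two non-terminating branches. When $2l_i/3$ is judged too large, $X_{i+1} = {\tt DelFar}(2l_i/3, X_i, 3/2)$, which by Theorem~\ref{DelFar} excludes every point of nearest-neighbor distance exceeding $(3/2)(2l_i/3) = l_i$; under the quartile event this yields $|X_{i+1}| \leq (3/4)|X_i|$. When $cl_i$ is judged too small, $X_{i+1}$ is a $(3/2)$-approximate $(4l_i)$-net, whose packing property forces pairwise distances $\geq 4l_i$. For each $p \in X_i^{\leq l_i}$, let $\phi(p)$ denote a nearest neighbor of $p$ in $X_i$; then $\phi(p) \in X_i^{\leq l_i}$, and $\|p - \phi(p)\| \leq l_i < 4l_i$ forbids both $p$ and $\phi(p)$ from lying in $X_{i+1}$. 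Moreover $\phi$ restricted to $X_{i+1} \cap X_i^{\leq l_i}$ is injective: if $\phi(p) = \phi(p') = q$ with $p \neq p'$, then $\|p - p'\| \leq 2l_i < 4l_i$ by the triangle inequality, contradicting the packing bound. Hence $|X_{i+1} \cap X_i^{\leq l_i}| \leq |X_i^{\leq l_i}|/2$, and combined with the quartile event we get $|X_{i+1}| \leq |X_i| - |X_i|/8 = (7/8)|X_i|$.

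The final step aggregates these bounds. The quartile event fires with probability $\geq 1/2$ independently of the history, and conditioning on the (high-probability) success of all subroutine calls across $O(\log n)$ iterations gives $\EE[|X_{i+1}| \mid |X_i|] \leq (15/16)|X_i|$. A standard Chernoff bound on the number of shrinking iterations then shows $|X_i| \leq (31/32)^i n$ with probability $1 - n^{-\Omega(1)}$, so the algorithm halts within $O(\log n)$ iterations. Summing the per-iteration cost $O(d|X_i|) + \tilde{O}(d |X_i|^{2-c_0})$ along this geometric sequence yields $\tilde{O}(dn^{2-c_0})$, and absorbing the polylogarithmic factor into the exponent (by choosing $c_0$ slightly smaller) gives $O(dn^{1.999999})$. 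The $O(\log n / n^{0.01})$-probability failure path is handled by aborting after $\Theta(\log n)$ rounds and resorting to an $O(dn^2)$ brute-force computation, which contributes only $o(dn^{1.999999})$ to the expectation. The main subtlety is the injectivity of $\phi$ in the net branch, which critically uses that the packing parameter $4l_i$ is strictly larger than twice the ``cluster radius'' $l_i$; this is precisely why the algorithm invokes {\tt ApprxNet} at scale $4l_i$ rather than something smaller.
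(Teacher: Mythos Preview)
Your proposal is correct and follows essentially the same strategy as the paper: identify a ``successful'' (quartile) iteration occurring with probability $\geq 1/2$, argue that on such an iteration either {\tt DelFar} or {\tt ApprxNet} removes a constant fraction of the points, and sum the resulting geometric series of per-iteration costs. The only substantive difference is that you spell out the net-halving argument (your injectivity-of-$\phi$ step) and the brute-force fallback explicitly, whereas the paper defers the former to \cite[Lemma~3.2.3]{HR15} and leaves the latter implicit.
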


\begin{proof}
In each iteration of the while loop the algorithm calls on input $X_i$ the $\frac{3}{2}$-decider procedure and either {\tt ApprxNet} or {\tt DelFar}, all of which cost ${O}(d\abs{X_i}^{1.999999})$ time. Thus, the total running time of the algorithm is ${O}(\sum_{i=0}^{i=k-1} d{|X_i|}^{1.999999} )$, where k denotes the last iteration of the while loop.

In the $(i+1)$th iteration of the while loop, where $(i+1<k)$, lets assume that $x_1, x_2,\dots,x_m$ is the points' labels in increasing order of their nearest neighbor distance in $X_i$. If j is the index of the chosen point on the first step of the algorithm and ${X_i}^{\geq j}$ and ${X_i}^{\leq j}$ are the subsets of points with index $\geq j$ and $\leq j$, respectively, then we call $i$ a successful iteration when $j \in [m/4,3m/4]$. Then, we have that $|X_i^{\geq j}| \geq |X_{i+1}|/4$ and $|X_i^{\leq j}| \geq |X_{i+1} |/4$ for a successful iteration. The probability that $i+1$ is a successful iteration is $1/2$.

At each iteration, but the last, either {\tt ApprxNet} or {\tt DelFar} gets called. Thus, for any successful iteration, a constant fraction of the point set is removed (it follows from Lemma 3.2.3 in \cite{HR15} and Theorem \ref{DelFar}).
Also, the algorithms $(1+\epsilon)$-decider, {\tt ApprxNet} and {\tt DelFar} succeed at every call with probability $1-\frac{O(\log n)}{n^{0.01}} = 1-o(1)$, since the expected number of iterations is $O(\log n)$. Hence, the expected running time of the algorithm is ${O}(dn^{1.999999})$, given the above algorithms succeed.
\end{proof}

At every step, either far points are being removed or we net the points. If the {\tt DelFar} algorithm is called, then with small probability we remove a point which is not far. This obviously affects the optimal value, thus we will prove the correctness of the algorithm with high probability. On the other hand, if the {\tt ApprxNet} algorithm is called, the net radius is always significantly smaller than the optimal value, so the accumulated error in the end, which is proportional to the radius of the last net computation, is also much smaller than the optimal value. For the following proofs we assume both {\tt DelFar} and {\tt ApprxNet} algorithms succeed, which occurs with probability $1-o(1)$.

\begin{lem}\label{drift}
For every iteration $i$, we have $\abs{f(X_i)-f(X_0)} \leq 16l_i$.
\end{lem}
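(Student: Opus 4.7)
The plan is to induct on $i$, tracking both $|f(X_i)-f(X_0)|$ and $l_i$ through each transition, with the base case $i=0$ being immediate since $X_0=X$.

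For the inductive step I would distinguish the two possible updates. When iteration $i-1$ invokes {\tt DelFar}$(2l_{i-1}/3, X_{i-1}, 3/2)$, the $3/2$-decider has already certified $f(X_{i-1}) < 2l_{i-1}/3$, so by the standing assumption on the problem class the removal of points whose nearest neighbor lies beyond this radius leaves the optimum intact: $f(X_i) = f(X_{i-1})$, and no fresh drift accrues. When iteration $i-1$ invokes {\tt ApprxNet}$(4l_{i-1}, X_{i-1}, 3/2)$, the output $X_i \subseteq X_{i-1}$ is a $4l_{i-1}$-packed, $6l_{i-1}$-covering net (with $\epsilon=1/2$), so every point of $X_{i-1}$ lies within $6l_{i-1}$ of a point of $X_i$, giving $|f(X_i) - f(X_{i-1})| \leq 12\, l_{i-1}$ by a triangle inequality applied to the pair realizing $f(X_{i-1})$; crucially, the packing property forces $l_i \geq 4\, l_{i-1}$, and this packing distance is inherited by every later iterate produced by {\tt DelFar}.

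Combining these two observations closes the induction. After an {\tt ApprxNet} transition, the inductive hypothesis gives $|f(X_i) - f(X_0)| \leq 16\, l_{i-1} + 12\, l_{i-1} = 28\, l_{i-1} \leq 7\, l_i \leq 16\, l_i$, which is comfortably below the claimed bound. After a {\tt DelFar} transition $f$ is unchanged, and I would anchor $l_i$ to the packing established by the most recent {\tt ApprxNet} iteration $j_0 < i$: since the packing distance $4\, l_{j_0}$ is preserved by every subsequent {\tt DelFar} sweep, we have $l_i \geq 4\, l_{j_0}$, while the drifts contributed by the earlier {\tt ApprxNet} calls at iterations $j_1 < \cdots < j_s = j_0$ sum geometrically via $l_{j_t} \leq 4^{t-s}\, l_{j_0}$ to at most $12\, l_{j_0} \cdot (4/3) = 16\, l_{j_0} \leq 4\, l_i$. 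The main obstacle is precisely this {\tt DelFar} case: $l_i$ may actually shrink across a {\tt DelFar} sweep relative to $l_{i-1}$, so naive telescoping against the previous iterate fails; the correct anchor is the packing radius of the most recent netting, which the algorithm's invariant never violates, and the factor $16$ in the statement is exactly the constant produced by the geometric series $12 \sum_{k \geq 0} 4^{-k}$.
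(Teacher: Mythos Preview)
Your argument is correct and follows the same route as the paper: the total drift is bounded by $12\sum l_j$ over the {\tt ApprxNet} iterations, these $l_j$ grow geometrically by a factor $\geq 4$, and the packing from the last {\tt ApprxNet} call lower-bounds $l_i$. The paper presents this as a single telescoping geometric sum rather than an induction, but the content is identical (your {\tt DelFar} case is exactly that sum); one tiny omission is the trivial case where no {\tt ApprxNet} has yet been called, in which $f(X_i)=f(X_0)$.
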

\begin{proof} Let $I$ be the set of indices of the {\tt ApprxNet} iterations up to the $i$th iteration. Similarly, let $I'$ be the set of iterations where {\tt DelFar} is called. \\
If {\tt ApprxNet} was called in the $j$th iteration, then $X_j$ is at most a $6l_j$-drift of $X_{j-1}$, therefore $\abs{f(X_j)-f(X_{j-1})} \leq 12l_j$. Also, if {\tt DelFar} is called in the $j$th iteration, then $f(X_j)=f(X_{j-1})$ (by Theorem \ref{DelFar}). Let m=maxI, we have that,
\[ \abs{f(X_i)-f(X_0)} \leq \sum^i_{j=1}\abs{f(X_j)-f(X_{j-1})}= \sum_{j\in I}\abs{f(X_j)-f(X_{j-1})}+\sum_{j \in I'}\abs{f(X_j)-f(X_{j-1})}\] \[\leq \sum_{j\in I} 12l_j + \sum_{j \in I'} 0 \leq 12l_m \sum^{\infty}_{j=0} {\left(\frac{1}{4}\right)}^j \leq 16l_m \leq 16l_i\]
,where the second inequality holds since for every $j<i$, in the beginning of the $j$th iteration of the while loop, the set of points $X_{j-1}$ is a subset of the net points of a $4l_i$-net, therefore $l_j \geq 4l_i$.
\end{proof}

\begin{lem}\label{TotDrift}
For any iteration $i$ of the while loop such that {\tt ApprxNet} gets called, we have $l_i \leq f(X_0)/ \eta$, where $\eta=c-16$.
\end{lem}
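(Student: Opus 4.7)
The plan is to unwind the condition under which \texttt{ApprxNet} gets invoked and then combine it with the drift bound already established in Lemma~\ref{drift}. Inspecting the pseudocode of \texttt{Net \& Prune}, the only branch that triggers a call to \texttt{ApprxNet} in iteration $i$ is the one labeled ``If $cl_i$ is too small'', meaning the $\tfrac{3}{2}$-decider called on $(cl_i, X_i)$ has reported $f(X_i) > cl_i$. So the first step is simply to record the consequence
\[
f(X_i) > c\,l_i.
\]

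Next, I would invoke Lemma~\ref{drift}, which tells us that for every iteration $i$,
\[
\lvert f(X_i) - f(X_0)\rvert \le 16\, l_i.
\]
In particular $f(X_0) \ge f(X_i) - 16\, l_i$. Combining this with the decider's verdict from the previous step gives
\[
f(X_0) \;\ge\; f(X_i) - 16\, l_i \;>\; c\, l_i - 16\, l_i \;=\; (c-16)\, l_i \;=\; \eta\, l_i,
\]
which rearranges to $l_i < f(X_0)/\eta$ and hence $l_i \le f(X_0)/\eta$ as claimed.

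There is essentially no obstacle here: the proof is a direct bookkeeping exercise using the branch condition of the algorithm plus the already-proven drift bound. The only subtlety worth pointing out is that $\eta = c-16$ must be positive for the statement to be meaningful, i.e.\ the constant $c$ in the algorithm must be chosen strictly larger than $16$; otherwise the conclusion is vacuous. Under this standing assumption on $c$, the two-line combination above completes the proof.
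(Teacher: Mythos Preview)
Your proof is correct, and in fact cleaner than the paper's. Both arguments rest on Lemma~\ref{drift}, but you invoke it directly at the current iteration~$i$: from the decider's verdict $f(X_i) > c\,l_i$ together with the drift bound $f(X_0) \ge f(X_i) - 16\,l_i$, the inequality $f(X_0) > (c-16)\,l_i = \eta\,l_i$ falls out in one line. The paper instead argues by induction on the sequence $m_1 < m_2 < \cdots$ of \texttt{ApprxNet} iterations: at step $m_i$ it applies Lemma~\ref{drift} at the \emph{previous} \texttt{ApprxNet} index $m_{i-1}$ to get $f(X_{m_{i-1}}) \le f(X_0) + 16\,l_{m_{i-1}}$, then plugs in the inductive hypothesis $l_{m_{i-1}} \le f(X_0)/\eta$ and checks the algebraic identity $(1 + 16/\eta)/c = 1/\eta$. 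Since Lemma~\ref{drift} is already stated for every iteration~$i$ (and its proof does not rely on the present lemma, so there is no circularity), your direct application at $i$ itself makes the induction unnecessary. The paper's route buys nothing extra here, so yours is a strict simplification. Your side remark that $c>16$ is required for the conclusion to be meaningful is also apt; the paper later fixes $c=64$.
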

\begin{proof} We will prove this with induction. Let $m_1,m_2, \ldots,m_t$ be the indices of the iterations of the while loop in which {\tt ApprxNet} gets called.\\
Base: In order for {\tt ApprxNet} to get called we must have $\eta l_{m_1} < cl_{m_1}<f(X_{m_1-1})$ and since this is the first time {\tt ApprxNet} gets called we have $f(X_{m_1-1})=f(X_0)$. Therefore, $\eta l_{m_1} < f(X_0)$.\\
Inductive step: Suppose that $l_{m_j}  \leq f(X_0)/ \eta$, for all $m_j < m_i$. If a call to $\frac{3}{2}$-rNet is made in iteration $m_i$ then again $cl_{m_i} < f(X_{(m_i)-1})=f(X_{m_{i-1}})$. Thus, by the induction hypothesis and Lemma \ref{drift} we have,
\[ l_{m_i}< \frac{f(X_{m_{i-1}})}{c} \leq \frac{f(X_0)+16l_{m_{i-1}}}{c} \leq \frac{f(X_0)+16f(X_0)/ \eta}{c}= \frac{1+16/ \eta}{c}f(X_0)=f(X_0)/ \eta \]
\end{proof}

Therefore, if we set $c=64$ we have $\eta=48$, thus by Lemma \ref{drift} and Lemma \ref{TotDrift},  \[ \abs{f(X_i)-f(X_0)} \leq 16l_i \leq 16 f(X_0)/\eta =f(X_0)/3\]

\begin{cor}\label{finalNP}
For $c \geq 64$ and for any iteration i we have:
\begin{itemize}
\item $(2/3)f(X_0) \leq f(X_i) \leq (4/3)f(X_0)$,
\item if $f(X_i) \in [x,y]$, then $f(X_0)  \in [(3/4)x,(3/2)y] \subseteq [x/2,2y] $,
\item if $f(X_0) >0$ then $f(X_i) >0$.
\end{itemize}
\end{cor}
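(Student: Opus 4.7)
The plan is to derive Corollary \ref{finalNP} as a direct arithmetic consequence of the two preceding lemmas. The key observation is that Lemma \ref{drift} bounds the total drift $|f(X_i) - f(X_0)|$ by $16 l_i$, while Lemma \ref{TotDrift} shows that at every iteration in which {\tt ApprxNet} gets called, the parameter $l_i$ is at most $f(X_0)/(c-16)$. Choosing $c \geq 64$ makes the denominator at least $48$, so the two bounds combine to give $|f(X_i) - f(X_0)| \leq f(X_0)/3$, and all three bullets reduce to arithmetic manipulation of this inequality.

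First I would dispose of the degenerate case in which no {\tt ApprxNet} call has occurred before iteration $i$: then only {\tt DelFar} has been invoked, and by Theorem \ref{DelFar} each such call preserves $f(\cdot)$, hence $f(X_i) = f(X_0)$ exactly and all three bullets hold trivially. In the general case, let $m \leq i$ be the largest index at which {\tt ApprxNet} was called. Lemma \ref{drift} yields $|f(X_i) - f(X_0)| \leq 16 l_m$, while Lemma \ref{TotDrift} yields $l_m \leq f(X_0)/\eta$ with $\eta = c - 16 \geq 48$. Multiplying gives $|f(X_i) - f(X_0)| \leq f(X_0)/3$, which rearranges to the first bullet, namely $(2/3) f(X_0) \leq f(X_i) \leq (4/3) f(X_0)$.

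For the second bullet I would take $f(X_i) \in [x,y]$ and plug into the two halves of the first bullet: $x \leq f(X_i) \leq (4/3) f(X_0)$ forces $f(X_0) \geq (3/4) x$, and symmetrically $(2/3) f(X_0) \leq f(X_i) \leq y$ forces $f(X_0) \leq (3/2) y$. The relaxed containment $[(3/4)x, (3/2)y] \subseteq [x/2, 2y]$ is immediate since $3/4 \geq 1/2$ and $3/2 \leq 2$. The third bullet is an immediate consequence of the lower half of the first bullet: if $f(X_0) > 0$, then $f(X_i) \geq (2/3) f(X_0) > 0$.

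I do not expect any real obstacle here: every step is a routine arithmetic combination of lemmas already proved. The only subtlety worth flagging is the separate treatment of iterations preceding the first {\tt ApprxNet} call, where Lemma \ref{TotDrift} has no content and we must instead invoke the exact preservation of $f(\cdot)$ under {\tt DelFar} from Theorem \ref{DelFar}. Beyond that, the choice $c \geq 64$ is calibrated precisely so that $\eta \geq 48$ and the resulting slack $16/\eta \leq 1/3$ is tight enough to deliver the stated interval guarantees.
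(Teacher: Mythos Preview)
Your proposal is correct and takes essentially the same approach as the paper: combine Lemma \ref{drift} with Lemma \ref{TotDrift} to obtain $|f(X_i)-f(X_0)| \leq 16\,f(X_0)/\eta \leq f(X_0)/3$ for $c \geq 64$, and then read off the three bullets by elementary arithmetic. If anything you are slightly more careful than the paper's one-line derivation, since you route the bound through $l_m$ (the last {\tt ApprxNet} iteration, to which Lemma \ref{TotDrift} actually applies) rather than through $l_i$, and you handle separately the degenerate case in which no {\tt ApprxNet} call has yet occurred.
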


\begin{thm} \label{ConSpread}
For $c \geq 64$, the Net \& Prune algorithm computes in ${O}(dn^{1.999999})$ time a constant spread interval containing the optimal value $f(X)$, with  probability $1-o(1)$.
\end{thm}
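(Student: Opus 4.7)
The plan is to combine three ingredients already in hand: the expected runtime bound established in the previous theorem, the drift/approximation guarantees of Corollary \ref{finalNP}, and a union bound over the failure probabilities of the subroutines. I will argue correctness (constant spread, containment of $f(X)$) separately from the runtime and success-probability bounds.

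First, for correctness, note that the while loop can terminate in one of three ways. (i) If the $\tfrac{3}{2}$-decider at radius $2l_i/3$ returns $f(X_i)\in[x,y]$, then by definition of an approximate decider we have $y\le (3/2)x$, and by Corollary \ref{finalNP} we obtain $f(X_0)\in[x/2,2y]$, an interval of spread at most $6$. (ii) Symmetrically, if the decider at radius $cl_i$ returns $f(X_i)\in[x',y']$, the same reasoning yields a spread-$6$ interval containing $f(X_0)$. (iii) Otherwise, the decider indicates $f(X_i)\ge 2l_i/3$ and $f(X_i)\le cl_i$, so $f(X_i)\in[2l_i/3,cl_i]$; applying Corollary \ref{finalNP} once more gives $f(X_0)\in[l_i/3,2cl_i]$, which has spread $6c$, a constant since $c\ge 64$ is fixed. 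So in every terminating case the output is a constant spread interval containing $f(X)$.

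Next, for the runtime and success probability, the previous theorem already shows that the expected running time is ${O}(dn^{1.999999})$ conditional on all invocations of {\tt ApprxNet}, {\tt DelFar}, and the decider succeeding. The number of iterations of the while loop is $O(\log n)$ in expectation (and, by standard Chernoff-type arguments on the geometric decrease of $|X_i|$, also with high probability), since each successful iteration deletes a constant fraction of the current pointset and successful iterations occur with probability $\tfrac{1}{2}$ each. Because each call to {\tt ApprxNet} or {\tt DelFar} succeeds with probability at least $1-1/n^{0.01}$, a union bound over the $O(\log n)$ iterations gives overall success probability $1-O(\log n)/n^{0.01}=1-o(1)$. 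Finally, I convert the expected runtime to a high-probability runtime bound by a standard Markov argument: with probability $1-o(1)$ the running time is within a constant factor of its expectation, and absorbing this constant into the exponent (choosing the $1.999999$ with a hair of slack compared to $2-\Theta(\sqrt{\epsilon})$ for $\epsilon=1/2$) yields the claimed bound.

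The main obstacle is the bookkeeping to ensure the union bound remains $o(1)$ and that the exponent can genuinely be taken to be $1.999999$ despite the $O(\log n)$ repetitions; this works because the $n^{0.01}$ failure probability per call is polynomially small while only $O(\log n)$ calls are made, and because the extra polylogarithmic factor in the runtime is absorbed by the slight gap between $2-\Theta(\sqrt{1/2})$ and $1.999999$. Everything else is essentially a direct invocation of the already-proved Corollary \ref{finalNP} and the previous theorem.
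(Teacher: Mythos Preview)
Your proof is correct and follows essentially the same route as the paper: case-split on the three ways the loop can terminate, invoke Corollary~\ref{finalNP} to get containment of $f(X_0)$ in the returned interval, and read off the spread ($\le 6$ in the decider cases, $6c$ in the remaining case, which the paper evaluates as $384$ for $c=64$). Your treatment is in fact more explicit than the paper's, which relegates the runtime and $1-o(1)$ success-probability arguments to the preceding theorem rather than restating them here; your additional Markov step to upgrade expected time to a high-probability bound is not in the paper (which appears to read the $O(dn^{1.999999})$ as an expected-time claim inherited from the prior theorem), but it does no harm.
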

\begin{proof}
Consider the iteration of the while loop at which Net \& Prune terminates. If the interval $[x,y]$ was computed by the $\frac{3}{2}$-decider, then it has spread $\leq \frac{3}{2}$. Thus, by Corollary \ref{finalNP} the returned interval $[x',y'] = [x/2,2y]$ contains the optimal value and its spread is $\leq 6$. Similarly, if $2l_i/3$ is too small and $cl_i$ too large, then the returned interval is $[\frac{l_i}{3}, 2cl_i]$ and its spread is $384$.
\end{proof}

\subsection{Proof of Theorem \ref{KND}}

\begin{proof}
For this particular problem, the optimal solution is not affected by the {\tt DelFar}'s removal of the points with no other point at distance at most $r$. Also, each time the {\tt ApprxNet} algorithm is called, for a fixed distance $r$, the drift of the optimal solution is at most $2r$. 
Thus, Theorem \ref{ConSpread} holds, and we compute a constant spread interval $[x,y]$ containing the optimal value, with high probability. We then apply binary search on values $x,(1+\epsilon)x,(1+\epsilon)^2x,\ldots,y$ using the algorithm {\tt $k$th NND Decider}. We perform $O(1/\log(1+\epsilon))=O(1/\epsilon^2)$ iterations, hence the total amount of time needed is $\tilde{O}(dn^{2-\Theta(\sqrt{\epsilon})})$ and the algorithm succeeds with high probability $1-o(1)$. 
\end{proof}

\end{document}